\newtheorem{thm}{Theorem}[section]
\newtheorem{prop}{Proposition}[section]
\newtheorem{rem}{Remark}[section]
\def\<{\left<}\def\>{\right>}
\def\({\left(}\def\){\right)}
\begin{document}

\title{Optimal implementation delay of taxation with trade-off for L\'{e}vy risk Processes}
\author{Wenyuan Wang, Xueyuan Wu and Cheng Chi}

\address{Wenyuan Wang: School of Mathematical Sciences, Xiamen University, Fujian 361005, People's Republic of China. Email address: wwywang@xmu.edu.cn (W. Wang).
Xueyuan Wu: Corresponding author.
              The University of Melbourne
              VIC 3010, Australia.
              Email address: xueyuanw@unimelb.edu.au (X. Wu).
Cheng Chi: Guanghua School of Management,Peking University, Beijing 100871, People's Republic of China. Email address: pkuchicheng@pku.edu.cn (C. Chi).}

%\thanks{Supported in part by  National Natural Science Foundation of China (No.11601197, No.11401498), Program for New Century Excellent Talents in Fujian Province University and NSERC (RGPIN-2016-06704).}

%\email{libo@nankai.edu.cn, linh_vn12@yahoo.com and xiaowen.zhou@concordia.ca}
\subjclass[2000]{Primary: 60G51; Secondary:  91B30, 93E20}
\keywords{Spectrally negative L\'{e}vy process, General tax structure, First crossing time, Joint Laplace transform, Potential measure.}

\begin{abstract}
In this paper we consider two problems on optimal implementation delay of taxation with trade-off for spectrally negative L\'{e}vy insurance risk processes. In the first case, we assume that an insurance company starts to pay tax when its surplus reaches a certain level $b$ and at the termination time of the business there is a terminal value incurred to the company. The total expected discounted value of tax payments plus the terminal value is maximized to obtain the optimal implementation level $b^*$. In the second case, the company still pays tax subject to an implementation level $a$ but with capital injections to prevent bankruptcy. The total expected discounted value of tax payments minus the capital injection costs is maximized to obtain the optimal implementation level $a^*$. Numerical examples are also given to illustrate the main results in this paper.

\end{abstract}

\maketitle

\medskip

%++++++++++++++++++++++++++++++++++++++++++++++++++++++++++++++++++++++++++++++++++++++++++++++
\section{Introduction }

In the recent actuarial science literature, there have been a lot of papers examining the impact of loss-carry-forward taxation under a risk process framework. Under the Cram\'{e}r-Lundberg risk models, \cite{Albrecher2008a}, \cite{Albrecher2009}, \cite{Albrecher2011}, \cite{Albrecher2007}, \cite{Cheung2012}, \cite{Ming2010}, \cite{Wang2011}, \cite{Wang2010}, and \cite{Wei2009} discussed a variety of tax-related problems. \cite{Wei2010} considered a Markov-modulated risk model with tax, and \cite{Albrecher2014a} discussed the tax identity problem for the Markov additive risk processes. Moreover, \cite{Albrecher2008a} studied a dual risk model with taxation, and \cite{Li2013} investigated a taxed time-homogeneous diffusion risk process. Among the above mentioned literatures, \cite{Albrecher2007} firstly addressed the problem of optimal implementation delay of taxation which aims to characterize the critical starting level for taxation that maximizes the expected aggregate discounted tax payments. Following the work of \cite{Albrecher2007}, \cite{Albrecher2008a} and \cite{Wang2010} solved the taxation implementation delay optimization problem, respectively, for the dual risk model with exponential jumps and the Cram\'{e}r-Lundberg risk model with a constant interest rate. In addition, \cite{Cheung2012} derived a necessary condition for the critical starting-tax surplus level of the optimal taxation implementation delay problem for the Cram\'{e}r-Lundberg risk model with surplus-dependent premium and tax rates.
\smallskip

Meanwhile, \cite{Albrecher2008b} solved the two-sided exit problem, the survival probability and the arbitrary moments of the discounted accumulated tax payments for the general spectrally negative L\'{e}vy risk process with loss-carry-forward tax payments collected at a constant rate. The taxation implementation delay problem of characterizing the optimal starting-tax surplus level to maximize the expected discounted total tax payments was also addressed. The results of \cite{Albrecher2008b} were generalized in \cite{Avram2017} by replacing the ruin stopping with the draw-down stopping. \cite{Kyprianou2009} considered a L\'{e}vy insurance risk model with tax payments of a more general structure. \cite{Kyprianou2012} further generalized the tax structure of \cite{Kyprianou2009} such that the tax rate function ranges over the whole real line.
Other research papers that studied L\'{e}vy risk processes with taxation include \cite{Albrecher2014b}, \cite{Hao2009}, \cite{Li2013}, \cite{Renaud2009}, \cite{Wang2012}, and \cite{Zhang2017}.

In investment literature, the terminal value of a company can be defined as the value of the business beyond the forecast period when future cash flows can be estimated. The discounted cash flow approach can be used to study the terminal value. In actuarial science literature, the terminal value is used as a tool to take account of the ruin time when studying problems like total present value of dividends until ruin, which has a retrospective focus of the concept. \cite{Thonhauser2007} introduced a value function which considers both expected dividends and the time value of ruin. The optimal strategy was identified for both the diffusion model and the Cram\'er-Lundberg model with exponential claim sizes. Meanwhile, \cite{Loeffen2009} considered an optimal dividends problem with a terminal value for the spectrally negative L\'{e}vy processes with a completely monotone jump density. \smallskip

This paper aims to study an optimal implementation delay of taxation under a L\'{e}vy insurance risk framework. We assume that taxation on an insurance company commences when the surplus of the insurance company reaches a predetermined level $b>0$, and we employ the terminal value concept to describe a particular feature of this taxation model: a lump sum incurred at termination of the insurance business, denoted by $S$, which displays a time-delaying feature. This lump sum could either be an income ($S<0$, a tax benefit) or an expense ($S>0$, a clawback of early tax relief) from the insurance company point of view. The termination is controlled by some predetermined condition, eg the surplus level down-crossing a certain level. In this paper, we assume that the risk process is terminated whenever its net surplus level becomes negative. We will discuss the optimal choice of a threshold surplus level, denoted be $b^*\in[0,\infty)$, at which the taxation commences, by maximizing the total expected discounted value of tax payments plus the terminal value. The existence of this terminal value $S$ causes an interesting trade-off between
maximizing the expected accumulated discounted tax payments until ruin and maximizing the expected timing value of $S$ at the termination of the insurance business, when determining the optimal tax threshold level $b^{*}$. In particular, compared to the scenario $S=0$ considered in \cite{Albrecher2008a}, \cite{Albrecher2007}, \cite{Albrecher2008b}, \cite{Avram2017}, \cite{Cheung2012} and \cite{Wang2010}, the inclusion of a negative terminal value $S\not= 0$ will lead to a trade-off between maximizing the total discounted tax payments and prolonging the life-span of the insurance company, rather than only attempting to collecting taxation as much and quickly as possible. Note that we may have no choice but to sacrifice the total discounted tax payments so that the life-span of the insurer can be prolonged.
\smallskip

In this paper, we also consider another type of trade-off in the above mentioned taxation model, with capital injections replacing the terminal value. The capital injections can prevent the insurance company from bankruptcy at some costs, i.e. the borrowing costs on those injected capitals. We assume that the insurance company can get certain tax deductions from the capital injection costs which results in a modified value function and slightly different optimization problem, with the initial tax implementation level $a$ retaking the center of the optimization. It needs to be pointed out that the introduction of capital injection results in a trade-off between maximizing the expected accumulated discounted tax payments and minimizing the expected accumulated discounted costs of capital injection, when determining the optimal tax threshold level $a^{*}$. Note that decreasing tax payments can help with reducing the need of capital injections as well as lowering the costs on receiving capital injections. Hence, on the contrary of the situation without capital injection as considered in \cite{Albrecher2008a}, \cite{Albrecher2007}, \cite{Albrecher2008b}, \cite{Avram2017}, \cite{Cheung2012}, and \cite{Wang2010}, the maximization of the total tax payments is no longer the sole objective when capital injection is in place. \cite{Albrecher2014b} studied an insurance surplus process subject to taxation and capital injection through a spectrally negative L\'{e}vy process which is refracted at its running maximum and at the same time reflected from below at a certain level. \cite{Zhou2007} also considered exit problems for spectrally negative L\'{e}vy processes reflected at either the supremum or the infimum.

This paper is organized as follows: Section 2 comprises some preliminaries concerning the spectrally negative  L\'{e}vy process. In Section 3 we present the mathematical setup of the taxation implementation delay problem with a terminal value. We derive the optimal tax implementation level $b^*$ given that the L\'{e}vy measure has a completely monotone density. The corresponding maximized value function is also determined at $b^*$. In Section 4 we study the optimal implementation delay of taxation with capital injections replacing the terminal value. The optimal tax implementation level $a^*$ is determined as well as the corresponding maximized value function. Also, in Section 5 we illustrate the optimal tax implementation results obtained in previous two sections using numerical examples.

%++++++++++++++++++++++++++++++++++++++++++++++++++++++++++++++++++++++++++++++++++++++++++++++
\section{Preliminaries of the spectrally negative L\'evy process}\label{sec:2}

In this section, we briefly review some preliminaries of the spectrally negative L\'{e}vy process.
Let $X=\{X(t);t\geq0\}$ with probability laws $\{\mathbb{P}_{x};x\in \mathbb{R}\}$ and natural filtration $\{\mathcal{F}_{t};t\geq0\}$ be a spectrally negative L\'{e}vy process, with the usual exclusion of pure increasing linear drift and the negative of a subordinator. Denote by $\overline{X}$ and $\underline{X}$, i.e. $$\overline{X}(t)=\sup\limits_{0\leq s\leq t}X(s),\quad \underline{X}(t)=\inf\limits_{0\leq s\leq t}X(s),\quad t\geq0,$$ the running supremum and running infimum process of $X$, respectively.\smallskip

Define the Laplace exponent of $X$ by
\begin{eqnarray}
\Psi(\theta):=\log\mathbb{E}_{x}\Big[\mathrm{e}^{\theta (X(1)-x)}\Big],\nonumber
\end{eqnarray}
which is known to be finite for  $\theta\in[0,\infty)$ where it is strictly convex and infinitely differentiable.
As defined in \cite{Bertoin1996}, for each $q\geq0$, the scale function $W_{q}:\,[0,\infty)\rightarrow[0,\infty)$ is a unique strictly increasing and continuous function with Laplace transform
\begin{eqnarray}
\int_{0}^{\infty}\mathrm{e}^{-\theta x}W_{q}(x)\mathrm{d}x=\frac{1}{\Psi(\theta)-q},\quad \theta>\Phi(q),\nonumber
\end{eqnarray}
where $\Phi(q)$ is the largest solution of the equation $\Psi(\theta)=q$. For convenience, we extend the domain of $W_{q}$ to the whole real line by setting $W_{q}(x) = 0$ for all $x < 0$.  In particular, write $W=W_0$ for simplicity. When $X$ has sample paths of unbounded variation, or when $X$ has sample paths of bounded variation and the L\'{e}vy measure has no atoms, the scale function $W_{q}$ is continuously differentiable over $(0, \infty)$. See \cite{Chan2011} for more detailed discussions on the smoothness of scale functions. \smallskip

We define
\begin{eqnarray}
Z_{q}(x)=1+q\int_{0}^{x}W_{q}(z)\mathrm{d}z \quad \mbox{and} \quad \overline{Z}_{q}(x)=\int_{0}^{x}Z_{q}(z)\mathrm{d}z,\quad q\geq0, \,x\geq0,\nonumber
\end{eqnarray}
with $Z_{q}(x)=1$ for $x<0$. For the process $X$, let its first down-crossing time of level $0$ and first up-crossing time of level $b\in(0,\infty)$ be
\begin{eqnarray}
\tau_{0}^{-}:=\inf\{t\geq0: X(t)<0\}\,\,\,\text{and}\,\,\,\tau^+_{b}:=\inf\{t\geq0:X(t)>b\}.\nonumber
\end{eqnarray}
Then, Chapter 8 in \cite{Kyprianou2006} gives, for $q\geq0$ and $b\in(0,\infty)$,
\begin{eqnarray}
\label{exi.pro.ell=0}
\mathbb{E}_{x}\Big[e^{-q\tau^{+}_{b}}
\mathbf{1}_{\{\tau^{+}_{b}<\tau_{0}^{-}\}}\Big]
=\frac{W_{q}(x)}{W_{q}(b)},\quad x\in(0, b].
\end{eqnarray}

Let $x\wedge y$ denote $\min(x, y)$ and let $x\vee y$ denote $\max(x, y)$. We now define a process $\{Y(t);t\geq0\}$ as
$$Y(t):=X(t)-\underline{X}(t)\wedge 0,\quad t\geq 0,$$
which is the L\'evy process $X$ reflected at its infimum. In risk theory, if $X_t$ denotes the surplus level of an insurance company at time $t$, then the term
$-\underline{X}(t)\wedge 0$
could be used to represent the cumulative capital injection up to time $t$, and hence $Y_t$ is the surplus process with capital injections such that $Y(t)\geq0$ for $t\geq0$, i.e., ruin never occurs.
Let $$\rho_{a}^{+}:=\inf\{t\geq0;Y(t)>a\},\quad a\in(0,\infty),$$
which is the first up-crossing time of level $a$ for the risk process $Y$.
Then, for $q\geq0$ and $b\in(0,\infty)$, by Proposition 2 of \cite{Pistorius2004}, it holds that
\begin{equation}\label{two.side.exit.}
\mathrm{E}_x\big[\mathrm{e}^{-q\rho_{a}^{+}}\big]=Z_{q}(x)/Z_{q}(a),\quad x\in[0,a].
\end{equation}
In addition, by the proof of Theorem 1 of \cite{Avram2007} (see the 1st and 2nd blocks of equations on Page 167),  we have
\begin{equation}\label{exp.cap.inj.unt.exit.}
\mathbb{E}\Big[\int_{0}^{\rho_{a}^{+}}\mathrm{e}^{-q t}\mathrm{d}(-\underline{X}(t)\wedge0)\Big]
=-\frac{\Psi^{\prime}(0+)}{q}+\frac{\overline{Z}_{q}(a)+\frac{\Psi^{\prime}(0+)}{q}}{Z_{q}(a)},\quad a\in(0,\infty),
\end{equation}
which is the expected total discounted capital injection made from time 0 until the risk process $Y$ hits $a$.

By Remark 4.5 of \cite{Wang2018b}, we see that
the function $\frac{W_{q}(x)}{W_{q}^{\prime}(x)}$ is increasing with respect to $x$, i.e.,
\begin{eqnarray}
\label{W}
\hspace{-0.3cm}&&\hspace{-0.3cm}
%\left(\frac{W_{q}(x)}{W_{q}^{\prime}(x)}\right)^{\prime}
%>0\Leftrightarrow
W_{q}(x)W_{q}^{\prime\prime}(x)<\left(W_{q}^{\prime}(x)\right)^{2},\quad x\in(0,\infty).
\end{eqnarray}
From Proposition 2 (ii) of \cite{Pistorius2004}, one has
\begin{eqnarray}
\label{25.ax}
&&
\mathrm{E}_{0}\big[\mathrm{e}^{-q\overline{\tau}_{a}^{+}}\big]=
Z_{q}(a)-
\frac{q\left(W_{q}(a)\right)^{2}}{W_{q}^{\prime}(a)}
,\nonumber
\end{eqnarray}
where
$$\overline{\tau}_{a}^{+}:=\inf\{t\geq0; \overline{X}(s)\vee0-X(t)>a\},$$
is the first passage time of $X$ reflected at its supremum.
Because $\lim\limits_{a\rightarrow \infty}\overline{\tau}_{a}^{+}=\infty$, one knows
\begin{eqnarray}
\label{phi.0}
\hspace{-0.3cm}&&\hspace{-0.3cm}
\lim_{a \to \infty}\Big(Z_{q}(a)-q\frac{\left(W_{q}(a)\right)^{2}}{W_{q}^{\prime}(a)}\Big)=0.
\end{eqnarray}
In addition,
it is found in \cite{Zhou2007}
% and \cite{Avram2007}
that
\begin{eqnarray}
\label{phi}
\hspace{-0.3cm}&&\hspace{-0.3cm}
\lim_{a \to \infty}\frac{W_{q}^{\prime}(a)}{W_{q}(a)}=\Phi(q),
\end{eqnarray}
which together with the L'H\^opital's rule gives
\begin{eqnarray}
\label{phi.00}\lim\limits_{a\rightarrow \infty}\frac{Z_{q}(a)}{Z_{q}^{\prime}(a)}=
\frac{1}{\Phi(q)}.
\end{eqnarray}

%++++++++++++++++++++++++++++++++++++++++++++++++++++++++++++++++++++++++++++++++++++++++++++++
\section{Optimal delay of taxation implementation with a terminal value at ruin}

We assume that the before-tax surplus process for an insurance company evolves according to $X$ defined above with $X(0)=x$. In the actuarial science literature, such a surplus process is referred to as a L\'evy risk process. For any given time $t\geq0$,
the insurer pays tax at a constant rate $\ell\in[0,1)$, provided that the insurer is in a profitable situation at time $t$, which is monitored by its running supremum process $\overline{X}(t)$.
Hence, the cumulative taxation collected over the time interval $[0,t]$, is given by
\begin{eqnarray}
\ell\left(\overline{X}(t)-x\right),\quad t\geq 0,\nonumber
\end{eqnarray}
and then the net surplus after tax at time $t$ is governed by
\begin{eqnarray}\label{2U}
U(t)=X(t)-\ell\left(\overline{X}(t)-x\right),\quad t\geq 0,
\end{eqnarray}
with $U(0)=x$.
A risk process with taxation similar to (\ref{2U}) was first considered in \cite{Albrecher2007} where $X$ was a Compound Poisson process with drift, and \cite{Albrecher2008b} extended to a general spectrally negative L\'evy process $X$. In those two pioneer papers \cite{Albrecher2007} and \cite{Albrecher2008b}, the arbitrary moments of the accumulated discounted tax payments and the two-sided exit problem were solved, and criteria for the surplus level for starting taxation to maximize the expected accumulated discounted tax payments were also characterized.

Define the first down-crossing time of $U$ at level $0$ and the first up-crossing time of $U$ at level $b$ as
\begin{eqnarray}
\sigma_{0}^{-}:=\inf\{t\geq0:U(t)<0\}\,\,\,\text{and}\,\,\,\,\sigma_{b}^{+}:=\inf\{t\geq0:U(t)>b\},\nonumber
\end{eqnarray}
with the usual convention $\inf\emptyset:=\infty$.

The following results of the two-sided exit problem, the Laplace transform of $\sigma_{0}^{-}$ and the expected accumulated discounted tax payments paid until $\sigma_{0}^{-}$ can be found in  \cite{Kyprianou2012}, with a slight adaptation, for $b\in(0,\infty)$, $x\in(0, b]$:
\begin{eqnarray}
\label{exi.pro.}
\mathbb{E}_{x}\Big[\mathrm{e}^{-q\sigma^{+}_{b}}
\mathbf{1}_{\{\sigma^{+}_{b}<\sigma_{0}^{-}\}}\Big]
=\left(\frac{W_{q}(x)}{W_{q}(b)}\right)^{\frac{1}{1-\ell}},
\end{eqnarray}
\begin{eqnarray}
\label{joint.lapl.for.U.}
\mathbb{E}_{x}\Big[\mathrm{e}^{-q \sigma_{0}^{-}};\sigma_{0}^{-}<\sigma_{b}^{+}\Big]
\hspace{-0.3cm}&=&\hspace{-0.3cm}
\frac{1}{1-\ell}
\int_{x}^{b}
\left(\frac{W_q(x)}{W_q(z)}\right)^{\frac{1}{1-\ell}} \left(\frac{W_{q}^{\prime}(z)}{W_{q}(z)}Z_{q}(z)-qW_{q}(z)\right)\,\mathrm{d}z, \qquad q\geq0,
\end{eqnarray}
and
\begin{eqnarray}\label{e.d.a.t.}
\mathbb{E}_{x}\Big[\int_{0}^{\sigma_{b}^{+}
\wedge\sigma_{0}^{-}}\mathrm{e}^{-q t}\ell\,\mathrm{d}\overline{X}(t)\Big]
\hspace{-0.3cm}&=&\hspace{-0.3cm}
\frac{\ell}{1-\ell}\int_x^{b}\left(\frac{W_{q}(x)}{W_{q}(z)}\right)^{\frac{1}{1-\ell}}\mathrm{d}z,
\qquad q\geq0.
\end{eqnarray}
By the arguments in Section 4.1 of \cite{Wang2019} and (7) in Lemma 3.2 of \cite{Wang2018}, one recalls that, for $x\in(0,\infty)$, $a\in(x,\infty)$,
\begin{equation}
\label{12}
\mathbb{E}_{x}\big[\mathrm{e}^{-q \sigma_{0}^{-}}\,\hbar\left(\overline{U}(\sigma_{0}^{-})\right); \sigma_{0}^{-}<\sigma_{a}^{+}\big]
%=
%\int_{x}^{x+\frac{a-x}{1-\ell}}\phi\left(\overline{\gamma}_{x}(s)\right)
%\exp\left(-\int_{x}^{s}\frac{W_{q}^{\prime}(\overline{\gamma}_{x}(z))}
%{W_{q}(\overline{\gamma}_{x}(z))}\mathrm{d}z\right)
%\nonumber\\
%\hspace{-0.3cm}&&\hspace{-0.3cm}
%\times\left(\frac{W_{q}^{\prime}(\overline{\gamma}_{x}(s))}{W_{q}(\overline{\gamma}_{x}(s))}Z_{q}(\ove%rline{\gamma}_{x}(s))-qW_{q}(\overline{\gamma}_{x}(s))\right)
%\mathrm{d}s
%\nonumber\\
%\hspace{-0.3cm}&&\hspace{-0.3cm}
=
\frac{1}{1-\ell}\int_{x}^{a}\hbar(s)
\left(\frac{W_{q}(x)}
{W_{q}(s)}\right)^{\frac{1}{1-\ell}}
\left(\frac{W_{q}^{\prime}(s)}{W_{q}(s)}Z_{q}(s)-qW_{q}(s)\right) \mathrm{d}s,
\end{equation}
for any measurable function $\hbar:\,(-\infty,\infty)\rightarrow(-\infty,\infty)$ such that the integration on the right-hand side of \eqref{12} is finite, and
\begin{eqnarray}\label{14}
\hspace{-0.3cm}&&\hspace{-0.3cm}
\mathbb{E}_{x}\big[\mathrm{e}^{-q \sigma_{0}^{-}}\left|U(\sigma_{0}^{-})\right|;\sigma_{0}^{-}<\sigma_{a}^{+}\big]
%=\int_{x}^{x+\frac{a-x}{1-\ell}}
%\exp\left(-\int_{x}^{s}\frac{W_{q}^{\prime}(\overline{\gamma}_{x}\left(w\right))}
%{W_{q}(\overline{\gamma}_{x}\left(w\right))}\mathrm{d}w\right)
%\nonumber\\
%\hspace{-0.3cm}&&\hspace{0.3cm}
%\times\left(Z_{q}(\overline{\gamma}_{x}(s))-\Psi^{\prime}(0+)W_{q}(\overline{\gamma}_{x}(s))-\frac{\overline{Z%}_{q}(\overline{\gamma}_{x}(s))-\Psi^{\prime}(0+)
%\overline{W}_{q}(\overline{\gamma}_{x}(s))}{W_{q}(\overline{\gamma}_{x}(s))}W_{q}^{\prime}(\overline{\%gamma}_{x}(s))\right)\mathrm{d}s
%\nonumber\\
%&&
=\frac{1}{1-\ell}\int_{x}^{a}
\left(\frac{W_{q}(x)}
{W_{q}(s)}\right)^{\frac{1}{1-\ell}}
\nonumber\\
&&\qquad\qquad\qquad\qquad\qquad
\times
\left(Z_{q}(s)-\Psi^{\prime}(0+)W_{q}(s)-\frac{\overline{Z}_{q}(s)-\Psi^{\prime}(0+)
\overline{W}_{q}(s)}{W_{q}(s)}W_{q}^{\prime}(s)\right)
\mathrm{d}s.
\end{eqnarray}

Unlike the conventional problems studied on risk models with taxation, in this section we shall take into consideration a terminal value $S\in(-\infty,\infty)$ incurred to the insurance company at the ruin time.
By \eqref{joint.lapl.for.U.} and \eqref{e.d.a.t.}, the expected discounted tax payments made by the insurance company until ruin plus the expected discounted terminal value at the ruin time can be calculated as, for $q\ge 0$, $b=\infty$,
\begin{eqnarray}
\label{psi}
\psi(x)
\hspace{-0.2cm}&:=&\hspace{-0.2cm}
\mathbb{E}_x\Big[\int_{0}^{\sigma_{0}^{-}}\mathrm{e}^{-q t}\ell\,\mathrm{d}\overline{X}(t)+S\mathrm{e}^{-q \sigma_{0}^{-}}\Big]
\nonumber\\
\hspace{-0.3cm}&=&\hspace{-0.3cm}
\frac{S}{1-\ell}\int_x^{\infty}\left(\frac{W_{q}(x)}{W_{q}(z)}\right)^{\frac{1}{1-\ell}}
\left(\frac{W_{q}^{\prime}(z)}{W_{q}(z)}Z_{q}(z)-qW_{q}(z)\right)\mathrm{d}z
\nonumber\\
\hspace{-0.3cm}&&\hspace{-0.3cm}
+\frac{\ell}{1-\ell}\int_x^{\infty}\left(\frac{W_{q}(x)}{W_{q}(z)}\right)^{\frac{1}{1-\ell}}\mathrm{d}z,
\quad  x\in[0,\infty).
\nonumber
\end{eqnarray}

Inspired by \cite{Albrecher2007} and \cite{Albrecher2008b}, we presume a threshold level $b\in[0,\infty)$, which is the bar that the insurance company's surplus needs to hit before tax is payable. Then we have a modified net surplus process, for $b\in[0, \infty)$,
\begin{eqnarray}\label{U_b}
U_b(t)=X(t)-\ell\left(\overline{X}(t)\vee b-x\vee b\right),\quad t\geq 0,\nonumber
\end{eqnarray}
with $U_b(0)=x$. The objective of this section is to determine the optimal threshold level $b^*\in[0, \infty)$ from which on the tax authority should collect taxes in order to maximize the expected total discounted value of tax payments plus the terminal value at ruin. For this purpose, let
$$\sigma_{0}^{-}(b):=\inf\{t\geq0; \,U_b(t)<0\},$$
be the ruin time of $U_b$. In addition, define the value function $\phi(x;b)$ with implementation delay of taxation by
\begin{eqnarray}
\label{phi.exp.00}
\phi(x;b)
\hspace{-0.3cm}&:=&\hspace{-0.3cm}
\mathbb{E}_{x}\left[\mathbf{1}_{\{\tau^+_{b}<\sigma_{0}^{-}(b)\}}\int_{\tau^+_{b}}^{\sigma_{0}^{-}(b)}\mathrm{e}^{-q t}\ell\,\mathrm{d}\overline{X}(t)
+S\mathrm{e}^{-q\sigma_{0}^{-}(b)}
%\mathbf{1}_{\{\tau^+_{b}<\sigma_{0}^{-}(b)\}}
%+S\mathrm{e}^{-q\sigma_{0}^{-}(b)}\mathbf{1}_{\{\sigma_{0}^{-}(b)<\tau^+_{b}\}}
\right],\nonumber
\end{eqnarray}
which, roughly speaking, is the value function of the expected discounted tax payments until ruin plus the expected discounted terminal value at ruin where taxation is delayed until the moment the risk process $U_b$ hits $b$. Due to the strong Markov property, \eqref{exi.pro.ell=0}, \eqref{joint.lapl.for.U.} and \eqref{e.d.a.t.}, we have, for $b\in(0,\infty)$, $x\in(0,b]$,
\begin{eqnarray}
\label{phi.exp.}
\phi(x;b)
\hspace{-0.3cm}&=&\hspace{-0.3cm}
\mathbb{E}_{x}\Big[\mathbf{1}_{\{\tau^+_{b}<\tau_{0}^{-}\}}\int_{\tau^+_{b}}^{\sigma_{0}^{-}(b)}\mathrm{e}^{-q t}\ell\,\mathrm{d}\overline{X}(t)
+S\mathrm{e}^{-q \sigma_{0}^{-}(b)}\mathbf{1}_{\{\tau^+_{b}<\sigma_{0}^{-}(b)\}}
+S\mathrm{e}^{-q \tau_{0}^{-}}\mathbf{1}_{\{\tau_{0}^{-}<\tau^+_{b}\}}\Big]
\nonumber\\
\hspace{-0.3cm}&=&\hspace{-0.3cm}
\mathbb{E}_{x}\Big[\mathrm{e}^{-q \tau^+_{b}}\mathbf{1}_{\{\tau^+_{b}<\tau_{0}^{-}\}}\Big]
\mathbb{E}_{b}\Big[\int_{0}^{\sigma_{0}^{-}}\mathrm{e}^{-q t}\ell\,\mathrm{d}\overline{X}(t)\Big]
+S\,\mathbb{E}_{x}\Big[\mathrm{e}^{-q \tau^+_{b}}\mathbf{1}_{\{\tau^+_{b}<\sigma_{0}^{-}\}}\Big]
\mathbb{E}_{b}\Big[\mathrm{e}^{-q \sigma_{0}^{-}}\Big]
\nonumber\\
\hspace{-0.3cm}&&\hspace{-0.3cm}
+S\,\mathbb{E}_{x}\Big[\mathrm{e}^{-q \tau_{0}^{-}}\mathbf{1}_{\{\tau_{0}^{-}<\tau^+_{b}\}}\Big]
\nonumber\\
\hspace{-0.3cm}&=&\hspace{-0.3cm}
\frac{S}{1-\ell}\frac{W_{q}(x)}{W_{q}(b)}
\int_b^{\infty}\left(\frac{W_{q}(b)}{W_{q}(z)}\right)^{\frac{1}{1-\ell}}
\left(\frac{W_{q}^{\prime}(z)}{W_{q}(z)}Z_{q}(z)-qW_{q}(z)\right)\mathrm{d}z
\nonumber\\
\hspace{-0.3cm}&&\hspace{-0.3cm}
+S\left(Z_{q}(x)-\frac{W_{q}(x)}{W_{q}(b)}Z_{q}(b)\right)
+\frac{\ell}{1-\ell}\frac{W_{q}(x)}{W_{q}(b)}\int_b^{\infty}\left(\frac{W_{q}(b)}{W_{q}(z)}\right)^{\frac{1}{1-\ell}}\mathrm{d}z\nonumber\\
\hspace{-0.3cm}&=&\hspace{-0.3cm} \psi(b)\frac{W_{q}(x)}{W_{q}(b)}+S\left(Z_{q}(x)-\frac{W_{q}(x)}{W_{q}(b)}Z_{q}(b)\right).
\end{eqnarray}
Define two auxiliary functions as follows
\begin{eqnarray}
\label{define.upsilon}
\hspace{-0.3cm}&&\hspace{-0.3cm}
\upsilon(b):=\psi(b)-SZ_{q}(b)\quad\mbox{and}\quad V(b):=\frac{W_{q}(b)}{W_{q}^{\prime}(b)}.
\end{eqnarray}

\medskip
The following result characterizes the global maximum point and the global maximum value of the value function $b\mapsto\phi(x;b)$ for fixed $x$. It should be mentioned that a verification argument similar to the one in \cite{Albrecher2007} is adopted. However, our arguments are more demanding since the inclusion of the terminal value brings in new challenges which require deeper understanding of the scale functions associated with the L\'evy process.

%************************Theorem 3.1*************************
\medskip
\begin{thm}
\label{th.3.1}
Assume that the L\'{e}vy measure has a completely monotone density.

If $\upsilon(0)>V(0)(1-SqW_{q}(0))$, then there is a unique positive solution $b^{+}$ of the following equation
\begin{eqnarray}
\label{equ.det.argmax.}
\hspace{-0.3cm}&&\hspace{-0.3cm}
\upsilon(b)-V(b)\left(1-SqW_{q}(b)\right)=0 \Leftrightarrow
\upsilon^{\prime}(b)=1-SqW_{q}(b),
\end{eqnarray}
and the function $\phi(x;b)$ attains its global maximum at $b^{*}=b^{+}$.
Otherwise, if $\upsilon(0)\leq V(0)(1-SqW_{q}(0))$, then \eqref{equ.det.argmax.} has no positive solution, and the function $\phi(x;b)$ attains its global maximum at $b^{*}=0$.

Moreover, we have
\begin{eqnarray}
\label{}
\hspace{-0.3cm}&&\hspace{-0.3cm}
\phi(x;b^{*})=SZ_{q}(x)+\frac{W_{q}(x)}{W_{q}^{\prime}(b^{*})}\left(1-SqW_{q}(b^{*})\right).
%,\quad x\in[0,b^*].
\nonumber
\end{eqnarray}
\end{thm}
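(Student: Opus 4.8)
The plan is to reduce the optimisation over $b$ to a one-dimensional calculus problem and then exploit the fine structure of the scale function under the completely monotone density hypothesis. First, inserting $\psi(b)=\upsilon(b)+SZ_q(b)$ from \eqref{define.upsilon} into the closed form \eqref{phi.exp.}, the $Z_q(b)$-terms cancel and one gets $\phi(x;b)=SZ_q(x)+\frac{W_q(x)}{W_q(b)}\upsilon(b)$; since $SZ_q(x)$ and $W_q(x)>0$ do not depend on $b$, maximising $b\mapsto\phi(x;b)$ is equivalent to maximising $g(b):=\upsilon(b)/W_q(b)$. Next I would simplify $\upsilon$: an integration by parts in the integral defining $\psi$, using $\frac{\ell}{1-\ell}=\frac1{1-\ell}-1$ and the vanishing of the boundary term $Z_q(z)\bigl(W_q(b)/W_q(z)\bigr)^{1/(1-\ell)}$ as $z\to\infty$ (valid since $\frac1{1-\ell}>1$ and, by \eqref{phi}--\eqref{phi.00}, $Z_q\asymp W_q$ grows like $e^{\Phi(q)z}$), produces the compact representation
\[
\upsilon(b)=\frac{\ell}{1-\ell}\int_b^\infty\Bigl(\frac{W_q(b)}{W_q(z)}\Bigr)^{\frac1{1-\ell}}\bigl(1-SqW_q(z)\bigr)\,\mathrm{d}z,
\]
whence $\upsilon'(b)=\frac1{1-\ell}\frac{W_q'(b)}{W_q(b)}\upsilon(b)-\frac{\ell}{1-\ell}\bigl(1-SqW_q(b)\bigr)$. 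Differentiating $g$ and substituting this gives $g'(b)=\frac{\ell}{1-\ell}\frac{W_q'(b)}{W_q(b)^{2}}F(b)$ with $F(b):=\upsilon(b)-V(b)\bigl(1-SqW_q(b)\bigr)$, so $g'$ and $F$ have the same sign and the interior critical points of $g$ are exactly the solutions of \eqref{equ.det.argmax.}; the equivalence displayed in \eqref{equ.det.argmax.} follows by putting $\upsilon(b)=V(b)(1-SqW_q(b))$ into the formula for $\upsilon'(b)$ and using $\frac1{1-\ell}(1-\ell)=1$. Finally, from the compact form of $\upsilon$ and \eqref{phi}--\eqref{phi.00} one checks that $F(b)\to(\ell-1)/\Phi(q)<0$ as $b\to\infty$, so $g$ is eventually strictly decreasing.

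The crux — and the step I expect to be the main obstacle — is to show that $g$ has at most one interior critical point and that it is a maximum. At any $b_{0}$ with $F(b_{0})=0$ we have $g''(b_{0})=\frac{\ell}{1-\ell}\frac{W_q'(b_{0})}{W_q(b_{0})^{2}}F'(b_{0})$, and a short computation using $\upsilon(b_{0})=V(b_{0})(1-SqW_q(b_{0}))$ yields $F'(b_{0})=1-V'(b_{0})\bigl(1-SqW_q(b_{0})\bigr)$, where $V'(b)=1-W_q(b)W_q''(b)/W_q'(b)^{2}>0$ by \eqref{W}. Hence $b_{0}$ is a strict local maximum precisely when $V'(b_{0})\bigl(1-SqW_q(b_{0})\bigr)>1$, and proving this at \emph{every} zero of $F$ is where the completely monotone density assumption is indispensable — this is the "deeper understanding of the scale functions'' announced before the theorem. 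Under that assumption $W_q$ is $C^{2}$ and $W_q'$ has no interior local maximum (it is first non-increasing, then non-decreasing); combining this structural fact with \eqref{W} and the asymptotics \eqref{phi}--\eqref{phi.00} one shows $V'(b)(1-SqW_q(b))-1>0$ wherever $F(b)=0$ (for $S=0$ this reduces to the classical fact that $W_q''$ changes sign at most once; controlling the extra term carrying $S$ is the new difficulty). Granting this, two interior critical points would force a local minimum between them, where $g''\ge0$ and hence $F'\ge0$ — a contradiction; so there is at most one, and it is a maximum.

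It then remains to assemble the dichotomy. If $\upsilon(0)>V(0)(1-SqW_q(0))$, i.e.\ $F(0^{+})>0$, then (since $g$ is eventually decreasing, by the intermediate value theorem) $F$ must vanish, by the previous step at a unique point $b^{+}$, with $g$ increasing on $(0,b^{+})$ and decreasing on $(b^{+},\infty)$; hence the global maximiser of $\phi(x;\cdot)$ is $b^{*}=b^{+}$. If instead $\upsilon(0)\le V(0)(1-SqW_q(0))$, then $F<0$ throughout $(0,\infty)$ — a zero would be a crossing of $F$ from negative to positive values, impossible since $F'<0$ at zeros of $F$ — so $g$ is strictly decreasing and its supremum is attained as $b\downarrow0$, i.e.\ $b^{*}=0$ (in the unbounded-variation case $W_q(0)=0$ one evaluates $\lim_{b\downarrow0}g(b)=(1-SqW_q(0))/W_q'(0)$ by l'Hôpital applied to the compact form of $\upsilon$). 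In both cases, substituting the first-order relation $\upsilon(b^{*})=V(b^{*})(1-SqW_q(b^{*}))$ — in the case $b^{*}=0$ its limiting form $g(0^{+})=(1-SqW_q(0))/W_q'(0)$ — into $\phi(x;b^{*})=SZ_q(x)+W_q(x)\,g(b^{*})$ gives $\phi(x;b^{*})=SZ_q(x)+\frac{W_q(x)}{W_q'(b^{*})}\bigl(1-SqW_q(b^{*})\bigr)$, as asserted.
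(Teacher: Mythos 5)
Your proposal correctly sets up the reduction (maximizing $g(b)=\upsilon(b)/W_q(b)$, deriving $g'(b)=\frac{\ell}{1-\ell}\frac{W_q'(b)}{W_q(b)^2}F(b)$ with $F(b)=\upsilon(b)-V(b)(1-SqW_q(b))$, showing $F(b)\to(\ell-1)/\Phi(q)<0$), and the integration-by-parts giving the compact form $\upsilon(b)=\frac{\ell}{1-\ell}\int_b^\infty(1-SqW_q(z))\bigl(W_q(b)/W_q(z)\bigr)^{1/(1-\ell)}\mathrm{d}z$ is a nice simplification that the paper does not state explicitly. Your computation $F'(b_0)=1-V'(b_0)\bigl(1-SqW_q(b_0)\bigr)$ at a zero of $F$ is correct, and it is indeed proportional (by the positive factor $W_q'(b_0)/W_q(b_0)$) to the paper's quantity $f(b_0)=\frac{1-SqW_q(b_0)}{W_q'(b_0)}W_q''(b_0)+SqW_q'(b_0)$, so you have identified exactly the right object.

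However, the part you yourself label the ``crux'' — showing that $V'(b_0)\bigl(1-SqW_q(b_0)\bigr)>1$ (equivalently $f(b_0)<0$) at \emph{every} zero of $F$, so that multiple critical points are impossible — is precisely what you do not prove; you only write ``one shows.'' This is the bulk of the paper's argument. The paper does not use the structural fact you invoke (``$W_q'$ has no interior local maximum''); it uses the strictly stronger Loeffen log-convexity inequality $W_q'(b)W_q'''(b)-\bigl(W_q''(b)\bigr)^2>0$, valid under the completely monotone density assumption, to show that $f'(b)>0$ wherever $1-SqW_q(b)\ge0$, and then it separates into the cases $S\le0$ (where $1-SqW_q$ is globally positive, so $f$ is globally increasing and a local min followed by a local max gives an immediate contradiction, plus a delicate argument showing $h(b)=F(b)$ has a strict local minimum at any putative saddle point) and $S>0$ (where $1-SqW_q(b)$ may change sign and a further case analysis on the sign of $1-SqW_q$ at the alleged critical points, combined with the sign of $W_q''$, produces the contradiction). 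Your weaker ``first non-increasing, then non-decreasing'' property of $W_q'$ (equivalently, $W_q''$ changes sign at most once) does not obviously imply the needed inequality once $S\ne0$ — the extra term $SqW_q'(b_0)$ in $f(b_0)$, and the possibility that $1-SqW_q(b_0)<0$ when $S>0$, require the full strength of the Loeffen inequality and a sign bookkeeping that your sketch does not perform. So there is a genuine gap: the uniqueness/maximality of the critical point, which is the substance of the theorem, is asserted but not established.

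A secondary point: your handling of the final formula when $b^*=0$ and the inequality $\upsilon(0)\le V(0)(1-SqW_q(0))$ is strict is not quite right in the bounded-variation case (where $W_q(0)>0$): there $\phi(x;0)=SZ_q(x)+\frac{W_q(x)}{W_q(0)}\upsilon(0)$, which agrees with $SZ_q(x)+\frac{W_q(x)}{W_q'(0)}\bigl(1-SqW_q(0)\bigr)$ only when the inequality is an equality. This is arguably a sloppiness already present in the theorem statement, but your proof should either note the caveat or explain why it does not arise.
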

%***************************************************************

\medskip
\begin{proof}
By \eqref{phi.0}, \eqref{phi} and the L'H\^opital's rule, one has
\begin{eqnarray}
\label{1}
\hspace{-0.3cm}&&\hspace{-0.3cm}
\lim_{b \to \infty}\int_b^{\infty}\left(\frac{W_{q}(b)}{W_{q}(z)}\right)^{\frac{1}{1-\ell}}\mathrm{d}z
=\frac{1-\ell}{\Phi(q)},
\end{eqnarray}
and
\begin{eqnarray}
\label{2}
\hspace{-0.3cm}&&\hspace{-0.3cm}
\lim_{b \to \infty}\int_b^{\infty}\left(\frac{W_{q}(b)}{W_{q}(z)}\right)^{\frac{1}{1-\ell}}
\left(\frac{W_{q}^{\prime}(z)}{W_{q}(z)}Z_{q}(z)-qW_{q}(z)\right)\mathrm{d}z=0.
\end{eqnarray}
Combining \eqref{phi}, {(\ref{1})} and {(\ref{2})} yields
\begin{eqnarray}
\label{jixian0}
\hspace{-0.3cm}&&\hspace{-0.3cm}
\lim_{b \to \infty}\left[\upsilon(b)-V(b)\left(1-SqW_{q}(b)\right)\right]
\nonumber\\
\hspace{-0.3cm}&=&\hspace{-0.3cm}
\lim_{b \to \infty}\left[\frac{S}{1-\ell}\int_b^{\infty}\left(\frac{W_{q}(b)}{W_{q}(z)}\right)^{\frac{1}{1-\ell}}
\left(\frac{W_{q}^{\prime}(z)}{W_{q}(z)}Z_{q}(z)-qW_{q}(z)\right)\mathrm{d}z
+S\Big(q\frac{[W_{q}(b)]^{2}}{W_{q}^{\prime}(b)}-Z_{q}(b)\Big)\right.
\nonumber\\
\hspace{-0.3cm}&&\hspace{0.3cm}
\left.+\frac{\ell}{1-\ell}\int_b^{\infty}\left(\frac{W_{q}(b)}{W_{q}(z)}\right)^{\frac{1}{1-\ell}}\mathrm{d}z-\frac{W_{q}(b)}{W_{q}^{\prime}(b)}\right]
\nonumber\\
\hspace{-0.3cm}&=&\hspace{-0.3cm}
\frac{\ell-1}{\Phi(q)}<0.
\end{eqnarray}
By \eqref{define.upsilon}, $\phi(x;b)$ can be rewritten as
\begin{eqnarray}
\label{}
\phi(x;b)
%\hspace{-0.3cm}&=&\hspace{-0.3cm}S\left(Z_{q}(x)-\frac{W_{q}(x)}{W_{q}(b)}Z_{q}(b)\right)+\frac{W_{q}(x)}{W_{q}(b)}\psi(b)
%\nonumber\\
\hspace{-0.3cm}&=&\hspace{-0.3cm}SZ_{q}(x)+\frac{W_{q}(x)}{W_{q}(b)}\,\upsilon(b),
\nonumber
\end{eqnarray}
from which one gets
\begin{eqnarray}
\label{daoshu}
\frac{\partial\phi(x;b)}{\partial b}
\hspace{-0.3cm}&=&\hspace{-0.3cm}\frac{W_{q}(x)}{[W_{q}(b)]^{2}}\left[W_{q}(b)\upsilon^{\prime}(b)-W_{q}^{\prime}(b)
\,\upsilon(b)\right]
\nonumber\\
\hspace{-0.3cm}&=&\hspace{-0.3cm}\frac{\ell}{1-\ell}\frac{W_{q}(x)W_{q}^{\prime}(b)}{[W_{q}(b)]^{2}}
\left[\upsilon(b)-V(b)\left(1-SZ_{q}^{\prime}(b)\right)\right],
\end{eqnarray}
where the following expression for $\upsilon^{\prime}(b)$ is used to derive the second equality in \eqref{daoshu}
\begin{eqnarray}
\label{}
\upsilon^{\prime}(b)
\hspace{-0.3cm}&=&\hspace{-0.3cm}
\frac{1}{1-\ell}\frac{\upsilon(b)}{V(b)}+\frac{\ell}{1-\ell}\left(SZ_{q}^{\prime}(b)-1\right).
\nonumber
\end{eqnarray}
According to {(\ref{daoshu})}, $\frac{\partial\phi(x;b)}{\partial b}=0$ is equivalent to
\eqref{equ.det.argmax.},
i.e., $b_{0}$ is a critical point of $b\mapsto\phi(x;b)$ if $b_{0}$ solves \eqref{equ.det.argmax.}.
To specify the nature of this critical point, we examine the second partial derivative of $\phi(x; b)$ with respect to $b$,
\begin{eqnarray}
\label{erjiedao}
\left.\frac{\partial^{2}\phi(x;b)}{\partial b^{2}}\right|_{b=b_{0}}
\hspace{-0.3cm}&=&\hspace{-0.3cm}
\frac{\ell}{1-\ell}\frac{W_{q}(x){W_{q}^{\prime\prime}(b_{0})\upsilon(b_{0})}}{[W_{q}(b_{0})]^{2}}
+\frac{\ell}{1-\ell}\frac{SqW_{q}(x)W_{q}^{\prime}(b_{0})}{W_{q}(b_{0})}
\nonumber\\
\hspace{-0.3cm}&=&\hspace{-0.3cm}
\frac{\ell}{1-\ell}\frac{W_{q}(x)}{W_{q}(b_{0})}\left(\frac{1-SqW_{q}(b_{0})}{W_{q}^{\prime}(b_{0})}W_{q}^{\prime\prime}(b_{0})+SqW_{q}^{\prime}(b_{0})\right)\nonumber\\
\hspace{-0.3cm}&=&\hspace{-0.3cm}\frac{\ell}{1-\ell}\frac{W_{q}(x)}{W_{q}(b_{0})}f(b_0),
\end{eqnarray}
where
\begin{equation}\label{f}
f(b):=\frac{1-SqW_{q}(b)}{W_{q}^{\prime}(b)}W_{q}^{\prime\prime}(b)+SqW_{q}^{\prime}(b).
\end{equation}
In what follows, we are devoted to ruling out the possibility that the map $b\mapsto\phi(x;b)$ has local minimum points or saddle points in $[0,\infty)$, by separately considering two opposite cases, i.e., {\bf Case A}: $S\leq0$ and {\bf Case B}: $S>0$.

\medskip
\noindent {\bf Case A}. $S\leq0$. In this case, we have $1-SqW_{q}(b)>0$ for all $b\in[0,\infty)$, and the following two claims {\bf A1} and {\bf A2} hold true.

\begin{enumerate}
\item[\bf{A1.}] {\it The map $b\mapsto\phi(x;b)$ cannot have a local minimum point in $[0,\infty)$.}\medskip

Indeed, by {(\ref{jixian0})} and {(\ref{daoshu})}, there must exist a sufficiently large $\overline{b}>0$ such that $\phi(x;b)$ is decreasing in $b$ over $[\overline{b},\infty)$. Hence, if $b_{1}$ is a local minimum point of $b\mapsto\phi(x;b)$, then there would have to exist a local maximum  $b_{2}\in(b_{1},\overline{b}\,]$, i.e.
\begin{eqnarray}
\label{maximum}
\hspace{-0.3cm}&&\hspace{-0.3cm}
\left.\frac{\partial^{2}\phi(x;b)}{\partial b^{2}}\right|_{b=b_{1}}>0
\quad\mathrm{and} \quad \left.\frac{\partial^{2}\phi(x;b)}{\partial b^{2}}\right|_{b=b_{2}}<0,\nonumber
\end{eqnarray}
or equivalently,
 \begin{eqnarray}
\label{maximum1}
\hspace{-0.3cm}&&\hspace{-0.3cm}
f(b_{1})>0
\quad\mathrm{and} \quad f(b_{2})<0.
\end{eqnarray}
However, as we assumed that the L\'evy measure of $X$ has a completely monotone density, by Corollary 1 of \cite{Loeffen2009}, we know
\begin{eqnarray}
\label{sanpie}
\hspace{-0.3cm}&&\hspace{-0.3cm}
W_{q}^{\prime}(b)W_{q}^{\prime\prime\prime}(b)-\left(W_{q}^{\prime\prime}(b)\right)^{2}>0,\quad b\in(0,\infty),
\end{eqnarray}
which gives
\begin{eqnarray}
\label{imp.fac.}
\hspace{-0.3cm}&&\hspace{-0.3cm}
f^{\prime}(b)=
\left(1-SqW_{q}(b)\right)\frac{W_{q}^{\prime}(b)W_{q}^{\prime\prime\prime}(b)
-[W_{q}^{\prime\prime}(b)]^{2}}{[W_{q}^{\prime}(b)]^{2}}
>0,\quad b\in(0,\infty).
\end{eqnarray}
Clearly, \eqref{imp.fac.} contradicts to {(\ref{maximum1})}, therefore claim {\bf A1} is true.

\item[\bf{A2}.] {\it The map $b\mapsto\phi(x;b)$ cannot have a saddle point $b_{0}$ in $[0,\infty)$.}\medskip

Since
\begin{eqnarray}
\label{}
\hspace{-0.3cm}&&\hspace{-0.3cm}
\upsilon(b_{0})=V(b_{0})\left(1-SqW_{q}(b_{0})\right) \quad\mathrm{and} \quad
\left.\frac{\partial^{2}\phi(x;b)}{\partial b^{2}}\right|_{b=b_{0}}=0,
\nonumber
\end{eqnarray}
one can verify that
\begin{eqnarray}
\label{}
\hspace{-0.3cm}&&\hspace{-0.3cm}
W_{q}^{\prime\prime}(b_{0})=-\frac{Sq[W_{q}^{\prime}(b_{0})]^{2}}{1-SqW_{q}(b_{0})}
\quad\mathrm{and} \quad
V^{\prime}(b_{0})=\frac{1}{1-SqW_{q}(b_{0})}.
\nonumber
\end{eqnarray}
For convenience, let $$h(b):=\upsilon(b)-V(b)\left(1-SqW_{q}(b)\right).$$
Clearly, $h(b_0)=0$. We will show that $b_{0}$ is a local minimum point of $h$. The first and second derivatives of $h$ are
\begin{eqnarray}
\label{}
h^{\prime}(b)
\hspace{-0.3cm}&=&\hspace{-0.3cm}
\upsilon^{\prime}(b)-V^{\prime}(b)\left(1-SqW_{q}(b)\right)+SqW_{q}^{\prime}(b)V(b)
\nonumber\\
\hspace{-0.3cm}&=&\hspace{-0.3cm}
\frac{\upsilon(b)}{(1-\ell)V(b)}+\frac{\ell}{1-\ell}\left(SqW_{q}(b)-1\right)
-\Big(1-\frac{W_{q}(b)W_{q}^{\prime\prime}(b)}{[W_{q}^{\prime}(b)]^{2}}\Big)\left(1-SqW_{q}(b)\right)+SqW_{q}(b)
\nonumber\\
\hspace{-0.3cm}&=&\hspace{-0.3cm}
\frac{h(b)}{(1-\ell)}+SqW_{q}(b)
+\frac{W_{q}(b)W_{q}^{\prime\prime}(b)}{[W_{q}^{\prime}(b)]^{2}}\left(1-SqW_{q}(b)\right),
\nonumber
\end{eqnarray}
and
\begin{eqnarray}
\label{}
h^{\prime\prime}(b)
\hspace{-0.3cm}&=&\hspace{-0.3cm}
\frac{h'(b)}{(1-\ell)}+SqW_{q}^{\prime}(b)-\frac{W_{q}(b)W_{q}^{\prime\prime}(b)}{[W_{q}^{\prime}(b)]^{2}}SqW_{q}^{\prime}(b)
\nonumber\\
\hspace{-0.3cm}&&\hspace{-0.3cm}
+\frac{\left(W_{q}^{\prime}(b)W_{q}^{\prime\prime}(b)+W_{q}(b)W_{q}^{\prime\prime\prime}(b)\right)W_{q}^{\prime}(b)-
2W_{q}(b)[W_{q}^{\prime\prime}(b)]^{2}}
{[W_{q}^{\prime}(b)]^3}\left(1-SqW_{q}(b)\right)
\nonumber\\
\hspace{-0.3cm}&=&\hspace{-0.3cm}
\frac{h'(b)}{(1-\ell)}+SqW_{q}^{\prime}(b)
+\frac{W_{q}^{\prime\prime}(b)}{W_{q}^{\prime}(b)}\left(1-SqW_{q}(b)\right)
+\frac{W_{q}(b)W_{q}^{\prime\prime\prime}(b)}{[W_{q}^{\prime}(b)]^{2}}\left(1-SqW_{q}(b)\right)\nonumber\\
\hspace{-0.3cm}&&\hspace{-0.3cm}
-\frac{2W_{q}(b)[W_{q}^{\prime\prime}(b)]^{2}}{[W_{q}^{\prime}(b)]^{3}}\left(1-SqW_{q}(b)\right)
-\frac{SqW_{q}(b)W_{q}^{\prime\prime}(b)}{W_{q}^{\prime}(b)}.
\nonumber
\end{eqnarray}
One can see that $h^{\prime}(b_{0})=0$, and
\begin{eqnarray*}
\hspace{-0.3cm}&&\hspace{-0.3cm}
SqW_{q}^{\prime}(b_0)
+\frac{W_{q}^{\prime\prime}(b_0)}{W_{q}^{\prime}(b_0)}\left(1-SqW_{q}(b_0)\right)=0.
\end{eqnarray*}
Also, we have
\begin{eqnarray}
-\frac{2W_{q}(b_{0})[W_{q}^{\prime\prime}(b_{0})]^{2}}{[W_{q}^{\prime}(b_{0})]^{3}}\left(1-SqW_{q}(b_{0})\right)
-\frac{SqW_{q}(b_{0})W_{q}^{\prime\prime}(b_{0})}{W_{q}^{\prime}(b_{0})}
\hspace{-0.3cm}&=&\hspace{-0.3cm}
-\frac{S^{2}q^{2}W_{q}(b_{0})W_{q}^{\prime}(b_{0})}{1-SqW_{q}(b_{0})}.
\nonumber
\end{eqnarray}
By \eqref{sanpie} we obtain
\begin{eqnarray}
\label{}
h^{\prime\prime}(b_{0})
\hspace{-0.3cm}&=&\hspace{-0.3cm}
\frac{W_{q}(b_{0})W_{q}^{\prime\prime\prime}(b_{0})}{[W_{q}^{\prime}(b_{0})]^{2}}\left(1-SqW_{q}(b_{0})\right)
-\frac{S^{2}q^{2}W_{q}(b_{0})W_{q}^{\prime}(b_{0})}{1-SqW_{q}(b_{0})}
\nonumber\\
\hspace{-0.3cm}&>&\hspace{-0.3cm}
\frac{W_{q}(b_{0})[W_{q}^{''}(b_{0})]^2}{[W_{q}^{\prime}(b_{0})]^3}\left(1-SqW_{q}(b_{0})\right)
-\frac{S^{2}q^{2}W_{q}(b_{0})W_{q}^{\prime}(b_{0})}{1-SqW_{q}(b_{0})}\nonumber\\
\hspace{-0.3cm}&=&\hspace{-0.3cm}
\frac{W_{q}(b_{0})\left(1-SqW_{q}(b_{0})\right)}{[W_{q}^{\prime}(b_{0})]^3}
\Big[(W_{q}^{''}(b_{0}))^2-\frac{S^{2}q^{2}(W_{q}^{\prime}(b_{0}))^4}{(1-SqW_{q}(b_{0}))^2}\Big]=0.
\nonumber
\end{eqnarray}
As a result, $h(b)$ reaches a local minimum value of $0$ at $b_{0}$. Hence, $h(b)>0$ holds true for all $b\in(b_{0},b_{0}+\epsilon_{0})$ for some $\epsilon_{0}>0$, which combined with {(\ref{daoshu})} implies $\frac{\partial\phi(x;b)}{\partial b}>0$ for all $b\in(b_{0},b_{0}+\epsilon_{0})$. Therefore, the saddle point $b_{0}$ of the map $b\mapsto\phi(x;b)$, has to be followed by a maximum or another saddle point $b_{1}\in(b_{0},\infty)$ of the map  $b\mapsto\phi(x;b)$, i.e., $f(b_{0})=0$ and $f(b_{1})\leq 0$ (c.f., {(\ref{erjiedao})} and {(\ref{f})}), contradicting to \eqref{imp.fac.}. Hence the claim {\bf A2} is true.
\end{enumerate}

\medskip
{\bf Case B}. $S>0$. The following two claims {\bf B1} and {\bf B2} are true.\medskip

\begin{enumerate}
\item[{\bf B1.}] {\it The map $b\mapsto\phi(x;b)$ cannot have a local minimum in $[0,\infty)$.}\medskip

If there exists a local minimum point $b_{1}$ of the map $b\mapsto\phi(x;b)$, using {(\ref{jixian0})} and {(\ref{daoshu})} once again we can see that there exists a local maximum point $b_{2}\in(b_{1},\infty)$, and then we obtain (c.f., {(\ref{erjiedao})} and {(\ref{f})})
\begin{eqnarray}
\label{fbudeng}
\hspace{-0.3cm}&&\hspace{-0.3cm}
f(b_{1})>0
\quad\mathrm{and} \quad f(b_{2})<0.
\end{eqnarray}
Combining {(\ref{f})} with \eqref{W} yields
\begin{eqnarray}
\label{maximumf(b)}
f(b)\hspace{-0.3cm}&=&\hspace{-0.3cm}
\frac{1-SqW_{q}(b)}{W_{q}^{\prime}(b)}W_{q}^{\prime\prime}(b)+SqW_{q}^{\prime}(b)\nonumber\\
\hspace{-0.3cm}&=&\hspace{-0.3cm}
\frac{W_{q}^{\prime\prime}(b)}{W_{q}^{\prime}(b)}-\frac{SqW_{q}(b)W_{q}^{\prime\prime}(b)}{W_{q}^{\prime}(b)}+SqW_{q}^{\prime}(b)
>
%\frac{W_{q}^{\prime\prime}(b)}{W_{q}^{\prime}(b)}-\frac{Sq[W_{q}^{\prime}(b)]^{2}}{W_{q}^{\prime}(b)}+SqW_{q}^{\prime}(b)=
\frac{W_{q}^{\prime\prime}(b)}{W_{q}^{\prime}(b)},\nonumber
\end{eqnarray}
which, together with \eqref{fbudeng}, implies that
\begin{eqnarray}
\label{maximumf(b)}
W_{q}^{\prime\prime}(b_{2})<0.
\end{eqnarray}
\begin{itemize}
    \item If $1-SqW_{q}(b_{2})\geq0$, the strict increasing property of the scale function $W_q(b)$ gives $1-SqW_{q}(b)\geq0$ for $b \in [b_{1},b_{2}]$. Similar to the derivation of result \eqref{imp.fac.}, we can show that $f'(b)>0$ for $b \in [b_{1},b_{2}]$, which contradicts to the result of $f(b_2)<0$ given in \eqref{fbudeng}.\smallskip
    \item Otherwise, if $1-SqW_{q}(b_{2})<0$, from $f(b_{2})=\frac{1-SqW_{q}(b_{2})}{W_{q}^{\prime}(b_{2})}W_{q}^{\prime\prime}(b_{2})+SqW_{q}^{\prime}(b_{2})<0$, it holds that $W_{q}^{\prime\prime}(b_{2})>\frac{-Sq[W_{q}^{\prime}(b_{2})]^{2}}{1-SqW_{q}(b_{2})}>0$, which contradicts to \eqref{maximumf(b)}.
\end{itemize}\medskip
Therefore, the claim {\bf B1} is true.\medskip

\item[{\bf B2.}] {\it The map $b\mapsto\phi(x;b)$ cannot have a saddle point $b_{0}$ in $[0,\infty)$.}\medskip

Assume that $b_0$ is a saddle point in $[0,\infty)$.\smallskip
\begin{itemize}
    \item If $1-SqW_{q}(b_{0})>0$, adopting the same arguments as in the claim {\bf A$_{2}$}, we can show that $h(b)$ reaches a local minimum value of $0$ at $b_{0}$, which, together with {(\ref{jixian0})} and {(\ref{daoshu})}, implies that there must be a local maximum point or another saddle point $b_{1}\in(b_{0},\infty)$ of the map $b\mapsto\phi(x;b)$, i.e.
\begin{eqnarray}
\label{maximumf(b00)}
f(b_{1})\leq0=f(b_{0}).
\end{eqnarray}
If $1-SqW_{q}(b_{1})\geq0$, \eqref{imp.fac.} holds true for all $b \in [b_{0},b_{1})$, contradicting to \eqref{maximumf(b00)}. Otherwise, if $1-SqW_{q}(b_{1})<0$, by $
\frac{1-SqW_{q}(b_{1})}{W_{q}^{\prime}(b_{1})}W_{q}^{\prime\prime}(b_{1})+SqW_{q}^{\prime}(b_{1})
=f(b_{1})\leq0$ we get $W_{q}^{\prime\prime}(b_{1})\geq\frac{-Sq[W_{q}^{\prime}(b_{1})]^{2}}{1-SqW_{q}(b_{1})}>0$, contradicting to $W_{q}^{\prime\prime}(b_{1})\leq0$ (c.f., arguments used in verifying \eqref{maximumf(b)}).\medskip

\item If $1-SqW_{q}(b_{0})\leq0$, adopting the same arguments as in verifying {(\ref{maximumf(b)})} we have $W_{q}^{\prime\prime}(b_{0})<0$. However, by $f(b_{0})=0$, one can see that $W_{q}^{\prime\prime}(b_{0})=-\frac{Sq[W_{q}^{\prime}(b_{0})]^{2}}{1-SqW_{q}(b_{0})}>0$, which is a contradiction.
\end{itemize}\medskip
Hence, the claim {\bf B2} is true.
\end{enumerate}

Summing up the arguments concerning {\bf Case A} and {\bf Case B}, we conclude that \eqref{equ.det.argmax.} has at most one positive solution.\medskip
\begin{itemize}
    \item When $\upsilon(0)>V(0)(1-SqW_{q}(0))$, by {(\ref{jixian0})}, one concludes that a positive solution of \eqref{equ.det.argmax.}, denoted by $b^{+}\in(0,\infty)$, exists and satisfies \begin{eqnarray}
    \label{maximumf(b00)01}
    \hspace{1cm}\upsilon(b)>V(b)(1-SqW_{q}(b)),\,\,b\in[0,b^{+})\quad \mbox{and} \quad \upsilon(b)<V(b)(1-SqW_{q}(b)),\,\,b\in(b^{+},\infty),\nonumber
    \end{eqnarray}
    i.e., the map $b\mapsto\phi(x;b)$ reaches its global maximum value at $b^{*}=b^{+}$.\medskip

\item When $\upsilon(0)<V(0)(1-SqW_{q}(0))$, the equation \eqref{equ.det.argmax.} has no positive solution and
\begin{eqnarray}
\label{maximumf(b00)002}
\upsilon(b)<V(b)(1-SqW_{q}(b)),\,\,b\in[0,\infty),
\end{eqnarray}
i.e., the map $b\mapsto\phi(x;b)$ reaches its global maximum value at $b^{*}=0$.\medskip
\item When $\upsilon(0)=V(0)(1-SqW_{q}(0))$, then \eqref{maximumf(b00)002} holds true for $b\in(0,\infty)$ and the map $b\mapsto\phi(x;b)$ reaches its global maximum value at $b^{*}=0$.
\end{itemize}\medskip
The proof is completed.
\end{proof}

\begin{rem}
The assumption that the L\'evy measure of $X$ has a completely monotone density was firstly introduced in \cite{Loeffen2009} to identify a sufficient condition, under which a barrier dividend strategy is the optimal dividend strategy producing the largest expected total discounted dividend payments.
Also, with this assumption, the scale functions have some desirable analytic properties.
\end{rem}

\begin{rem} When $S=0$, we can see that
\begin{eqnarray}
\label{}
\hspace{-0.3cm}&&\hspace{-0.3cm}
\phi(x;b)=\frac{W_{q}(x)}{W_{q}(b)}\psi(b),\qquad \upsilon(b)=\psi(b)=\frac{\ell}{1-\ell}\int_b^{\infty}\left(\frac{W_{q}(b)}{W_{q}(z)}\right)^{\frac{1}{1-\ell}}\mathrm{d}z,
\nonumber
\end{eqnarray}
and $\frac{\partial\phi(x;b)}{\partial b}=0$ is equivalent to
\begin{eqnarray}
\label{}
\hspace{-0.3cm}&&\hspace{-0.3cm}
\upsilon(b)=V(b) \quad \mathrm{or} \quad \upsilon^{\prime}(b)=1,
\nonumber
\end{eqnarray}
which has at most one positive solution.
If $\upsilon(0)\leq V(0)$, then the map $b\mapsto\phi(x;b)$ attains its global maximum at $b^{*}=0$. If $\upsilon(0)> V(0)$,  the map $b\mapsto\phi(x;b)$ attains its global maximum at the positive solution $b^{*}$ of \eqref{equ.det.argmax.}. Thus, the global maximum value of $\phi(x;b)$ is given by
\begin{eqnarray}
\label{}
\hspace{-0.3cm}&&\hspace{-0.3cm}
\phi(x;b^{*})=\frac{W_{q}(x)}{W_{q}(b^{*})}\upsilon(b^{*})=\frac{W_{q}(x)}{W_{q}^{\prime}(b^{*})}.
\nonumber
\end{eqnarray}
These results coincide with the ones obtained in \cite{Albrecher2007}.
\end{rem}

%++++++++++++++++++++++++++++++++++++++++++++++++++++++++++++++++++++++++++++++++++++++++++++++
\section{Optimal delay of taxation implementation with penalized capital injection}

\medskip
This section is devoted to studying the optimal implementation delay of taxation for a L\'evy risk process with capital injections that reflect the risk process at its infimum (or, at $0$). Taking capital injections into consideration under the L\'evy risk process, one can see that the surplus process never goes bankrupt, and hence taxation payments are expected to be collectible up to $\infty$ rather than up to the time of ruin as in the case without capital injections. No terminal value will be considered in this section.\smallskip

In \cite{Albrecher2014b}, a risk process refracted from both above and below with rates less than 1 was defined path by path through a recursive algorithm. The risk process $\{R(t);t\geq0\}$ with taxation and capital injections, which serves as the main object of this section as well as in \cite{Albrecher2014b}, is actually a risk process that is refracted from above with rate $\ell<1$ and refracted from below with rate 1. In order to better develop our arguments, we re-define the risk process $\{R(t);t\geq0\}$ in a more transparent way as follows.

\begin{itemize}
    \item Let %$\sigma_{a}^{+}(1)=0$,
%$\sigma_{0,0}^{-}=0$,
$\sigma_{0,1}^{-}=\sigma_{0}^{-}$ %$U_{0}(t)=X(t)$,
and $U_{1}(t)=U(t)$ for $t\geq0$. If $\sigma_{0,1}^{-}<\infty$, then define
$$R_{1}(t)=X(t)-X(\sigma_{0,1}^{-})
-\inf_{s\in[\sigma_{0,1}^{-},t]}
\left(X(t)-X(\sigma_{0,1}^{-})\right)\wedge 0,\quad t\in[\sigma_{0,1}^{-},\infty),$$
and
$$\mathrm{H}_{1}=\sup_{0\leq s< \sigma_{0,1}^{-}}U_{1}(s),\quad \sigma_{\mathrm{H}_{1},1}^{+}=\inf\{t\geq\sigma_{0,1}^{-}; R_{1}(t)>\mathrm{H}_{1}\}.$$
Define the risk process $R$ over the time interval $[0,\sigma_{\mathrm{H}_{1},1}^{+})$ as
$$
\begin{array}{rcl}
R(t)\hspace{-0.3cm}&=&\hspace{-0.3cm}U_{1}(t),\quad \mbox{if}\quad 0\leq t<\sigma_{0,1}^{-},\\
R(t)\hspace{-0.3cm}&=&\hspace{-0.3cm}R_{1}(t),\quad \mbox{if}\quad \sigma_{0,1}^{-}\leq t<\sigma_{\mathrm{H}_{1},1}^{+}\mbox{ and }\,\sigma_{0,1}^{-}<\infty.
\end{array}
$$
%where $\overline{U}_{1}(t)=\sup_{s\in[0,t]}U_{1}(s)$ is the running supremum process of $U_{1}$.
%If $t\in[0,\sigma_{0,1}^{-}]$, let $R(t)=U_{1}(t)$.
%If $t\in(\sigma_{0,1}^{-},\sigma_{\mathrm{H}_{1},1}^{+}]$, define $R(t)=R_{1}(t)$.

\item For $n\geq1$, if $\sigma_{\mathrm{H}_{n},n}^{+}<\infty$ (or, equivalently, $\sigma_{0,n}^{-}<\infty$), let
\begin{eqnarray}
U_{n+1}(t)
\hspace{-0.3cm}&=&\hspace{-0.3cm}\mathrm{H}_{n}
+X(t)-X(\sigma_{\mathrm{H}_{n},n}^{+})
\nonumber\\
\hspace{-0.3cm}&&\hspace{-0.3cm}
-\ell \Big(\sup_{\sigma_{\mathrm{H}_{n},n}^{+}\leq s\leq t} X(s)-X(\sigma_{\mathrm{H}_{n},n}^{+})\Big),\quad t\in[\sigma_{\mathrm{H}_{n},n}^{+},\infty),\nonumber
\end{eqnarray}
and
$$\sigma_{0,n+1}^{-}=\inf\{t\geq \sigma_{\mathrm{H}_{n},n}^{+}; U_{n+1}(t)<0\},\quad \mathrm{H}_{n+1}=\sup_{\sigma_{\mathrm{H}_{n},n}^{+}\leq s< \sigma_{0,n+1}^{-}}U_{n+1}(s),$$
with the convention of $\inf\emptyset=\infty$.
If $\sigma_{0,n+1}^{-}<\infty$, let further
$$R_{n+1}(t)=X(t)-X(\sigma_{0,n+1}^{-})
-\inf_{s\in[\sigma_{0,n+1}^{-},t]}
\left(X(t)-X(\sigma_{0,n+1}^{-})\right)\wedge 0,\quad t\in[\sigma_{0,n+1}^{-},\infty),$$
and
$$\sigma_{\mathrm{H}_{n+1},n+1}^{+}=\inf\{t\geq\sigma_{0,n+1}^{-}; R_{n+1}(t)>\mathrm{H}_{n+1}\}.$$
Accordingly, %if $\sigma_{\mathrm{H}_{n},n}^{+}<\infty$,
define the risk process $R$ over the time interval $[\sigma_{\mathrm{H}_{n},n}^{+},\sigma_{\mathrm{H}_{n+1},n+1}^{+})$ as
$$
\begin{array}{rcl}
R(t)\hspace{-0.3cm}&=&\hspace{-0.3cm}U_{n+1}(t),\quad \mbox{if}\quad \sigma_{\mathrm{H}_{n},n}^{+}\leq t<\sigma_{0,n+1}^{-}\mbox{ and }\,\sigma_{\mathrm{H}_{n},n}^{+}<\infty,\\
R(t)\hspace{-0.3cm}&=&\hspace{-0.3cm}R_{n+1}(t),\quad \mbox{if}\quad \sigma_{0,n+1}^{-}\leq t<\sigma_{\mathrm{H}_{n+1},n+1}^{+}\mbox{ and }\,\sigma_{0,n+1}^{-}<\infty.
\end{array}
$$
\end{itemize}
It can be verified that $\lim\limits_{n\rightarrow \infty}\sigma_{0,n}^{-}=\lim\limits_{n\rightarrow \infty}\sigma_{\mathrm{H}_{n},n}^{+}=\infty$ almost surely under $\mathbb{P}_{x}$ with $x\in(0,\infty)$. Hence the process $R$ constructed as above is well defined.

For $a\in(0,\infty)$, let
$\kappa_{a}^{+}:=\inf\{t\geq0; R(t)>a\}$
be the first up-crossing time of level $a$ for the risk process $R$.
In addition, for $a\in(0,\infty),\,x\in(0,a]$, we define
$$f_{a}(x):=
\mathbb{E}_{x}\left[\mathrm{e}^{-q \kappa_{a}^{+}}\right],$$
and
\begin{eqnarray}
\hspace{-0.1cm}g_{a}(x)
\hspace{-0.3cm}&:=&\hspace{-0.3cm}
\mathbb{E}_{x}\bigg[\sum_{n=0}^{\infty}\ell\bigg(\int_{\sigma_{\mathrm{H}_{n},n}^{+}}^{\sigma_{0,n+1}^{-}}\mathrm{e}^{-q t} \mathrm{d} \Big(\sup_{\sigma_{\mathrm{H}_{n},n}^{+}\leq s\leq t} X(s)-X(\sigma_{\mathrm{H}_{n},n}^{+})\Big)
\mathbf{1}_{\{\sigma_{0,n+1}^{-}\leq \kappa_{a}^{+}\}}
\nonumber\\
\hspace{-0.3cm}&&\hspace{-1cm}
+
\int_{\sigma_{\mathrm{H}_{n},n}^{+}}^{\kappa_{a}^{+}}\mathrm{e}^{-q t}\mathrm{d} \Big(\sup_{\sigma_{\mathrm{H}_{n},n}^{+}\leq s\leq t} X(s)-X(\sigma_{\mathrm{H}_{n},n}^{+})\Big)\mathbf{1}_{\{\sigma_{\mathrm{H}_{n},n}^{+}\leq \kappa_{a}^{+}<\sigma_{0,n+1}^{-}\}}
\bigg)\bigg],
\nonumber
\end{eqnarray}
where
$\sigma_{\mathrm{H}_{0},0}^{+}:=0$, and
\begin{eqnarray}
r_{a}(x)\hspace{-0.3cm}&:=&\hspace{-0.3cm}
\mathbb{E}_{x}\bigg[\sum_{n=1}^{\infty}
\bigg(\mathrm{e}^{-q \sigma_{0,n}^{-}}
\left|U_{n}(\sigma_{0,n}^{-})\right|
-
\int_{\sigma_{0,n}^{-}}
^{\sigma_{\mathrm{H}_{n},n}^{+}}
\mathrm{e}^{-q t}\mathrm{d}\Big(\inf_{s\in[\sigma_{0,n}^{-},t]}
\left(X(t)-X(\sigma_{0,n}^{-})\right)\wedge 0\Big)
\bigg)
\nonumber\\
\hspace{-0.3cm}&&\hspace{-0.3cm}
\times\mathbf{1}_{\{\sigma_{\mathrm{H}_{n},n}^{+}\leq \kappa_{a}^{+}\}}\bigg].
\nonumber
\end{eqnarray}
Hence, $g_{a}(x)$ represents the expected total discounted tax payments collected until the risk process $R$ reaches $a$, and $r_{a}(x)$ represents the expected total discounted capital injections made until the risk process $R$ reaches $a$.

\medskip
The following result expresses the functions $f_{a}$, $ g_{a}$ and $ r_{a}$ by the scale functions $W_{q}$ and $Z_{q}$.
It should be mentioned that the two identities in \eqref{two.sid.tax.inj.} can also be found in \cite{Albrecher2014b}. However, a new and transparent argument involving the excursion theory associated with L\'evy processes is adopted in our case.

\medskip
\begin{prop}
\label{pro4.1}
For $q\geq0$, $a\in(0,\infty)$, and $x\in (0, a]$, we have
\begin{eqnarray}
\label{two.sid.tax.inj.}
f_{a}(x)=\left(\frac{Z_{q}(x)}{Z_{q}(a)}\right)^{\frac{1}{1-\ell}},\quad
g_{a}(x)=
\frac{\ell}{1-\ell}\int_x^{a}\left(\frac{Z_{q}(x)}{Z_{q}(w)}\right)^{\frac{1}{1-\ell}}\mathrm{d}w,
\end{eqnarray}
and
\begin{eqnarray}
\label{e.d.a.t.01}
\quad\quad \,\,\,r_{a}(x)=
\frac{1}{1-\ell}\int_x^{a}
\left(Z_{q}(w)-\left(\overline{Z}_{q}(w)+\frac{\Psi^{\prime}(0+)}{q}\right)
\frac{qW_{q}(w)}{Z_{q}(w)}\right)
\left(\frac{Z_{q}(x)}{Z_{q}(w)}\right)^{\frac{1}{1-\ell}}\mathrm{d}w.
\end{eqnarray}
\end{prop}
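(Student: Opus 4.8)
The plan is to exploit the path-by-path construction of $R$ together with the strong Markov property to reduce each of the three identities to the corresponding ``single-excursion'' formulas already recorded in Section~3, namely \eqref{exi.pro.}, \eqref{e.d.a.t.}, \eqref{12} and \eqref{14}, and in Section~2, namely \eqref{two.side.exit.} and \eqref{exp.cap.inj.unt.exit.}. The key observation is that the construction alternates between two regimes: a ``taxed'' phase governed by a spectrally negative L\'evy process refracted at its running supremum (driving the $U_n$ pieces, for which \eqref{exi.pro.}--\eqref{14} apply on the event $\{\tau^+_b<\tau_0^-\}$), and a ``capital-injection'' phase governed by $X$ reflected at its infimum (driving the $R_n$ pieces, for which \eqref{two.side.exit.} and \eqref{exp.cap.inj.unt.exit.} apply). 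At the switching epochs $\sigma_{0,n}^-$ and $\sigma_{\mathrm H_n,n}^+$ the process $R$ equals $0$ and $\mathrm H_n$ respectively, so the strong Markov property lets us factor the discounting across phases.

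First I would treat $f_a$. Starting from $x\in(0,a]$, decompose $\kappa_a^+$ according to whether the first taxed phase up-crosses $a$ before down-crossing $0$. Conditioning on $\{\sigma_{0,1}^-<\tau^+_{\mathrm H_1}... \}$-type events and using \eqref{exi.pro.} one gets, writing $f(x):=\lim_{b\to\infty}$-type normalisations are not needed here, a renewal-type identity $f_a(x)=\mathbb E_x[\mathrm e^{-q\sigma_{0,1}^-}\mathbf 1_{\{\sigma_{0,1}^-<\sigma_{\mathrm H_1}^+\wedge\kappa_a^+\}}]\cdot(\text{contribution of one reflected excursion back to }\mathrm H_1)\cdot f_a(\mathrm H_1)+\mathbb E_x[\mathrm e^{-q\kappa_a^+}\mathbf 1_{\{\kappa_a^+<\sigma_{0,1}^-\}}]$. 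Because each reflected phase returns the process to the \emph{same} level $\mathrm H_n$ that was the running supremum at the end of the preceding taxed phase, and because $\kappa_a^+$ is a up-crossing of level $a$, the whole product telescopes; matching against the Laplace-transform identity \eqref{two.side.exit.} for the reflected process and \eqref{exi.pro.ell=0} for the unreflected one, one verifies that the candidate $\left(Z_q(x)/Z_q(a)\right)^{1/(1-\ell)}$ satisfies this renewal equation. An alternative, cleaner route I would actually prefer: observe that $\tilde R(t):=R(t)$ is itself a $(1,\ell)$-refracted process, and combine the two-sided exit law \eqref{exi.pro.} of the $\ell$-refracted (taxed) part with the two-sided exit law \eqref{two.side.exit.} of the $1$-refracted (injection) part via excursion theory of $X$ away from its reflected supremum/infimum, exactly as advertised in the remark preceding the proposition; the exponent $\tfrac1{1-\ell}$ and the replacement $W_q\rightsquigarrow Z_q$ (i.e.\ $q\tau^+$-type quantities replaced by $q\rho^+$-type quantities) come out of comparing the generators of the two phases.

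For $g_a$ and $r_a$ the same decomposition applies, now accumulating the reward rather than just the discount factor. For $g_a$, tax is collected only during taxed phases at rate $\ell\,\mathrm d\overline X$; on a single taxed phase started at level $\mathrm H_n$ this contributes, by \eqref{e.d.a.t.}, $\tfrac{\ell}{1-\ell}\int_{\mathrm H_n}^{a}(W_q(\mathrm H_n)/W_q(z))^{1/(1-\ell)}\mathrm dz$ if that phase is the one reaching $a$, and $\tfrac{\ell}{1-\ell}\int_{\mathrm H_n}^{\mathrm H_{n+1}}\cdots$ otherwise, times the appropriate discount/overshoot factors; summing over $n$ and telescoping against the $f_a$-factors already computed collapses the sum to a single integral, which after the change of variables dictated by the $Z_q$-vs-$W_q$ dictionary becomes $\tfrac{\ell}{1-\ell}\int_x^a(Z_q(x)/Z_q(w))^{1/(1-\ell)}\mathrm dw$. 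For $r_a$, capital injection occurs in two ways per cycle: the initial ``jolt'' $|U_n(\sigma_{0,n}^-)|$ needed to bring the surplus back to $0$ — handled by \eqref{14} — and the subsequent reflected-at-infimum push $-\inf(X(t)-X(\sigma_{0,n}^-))\wedge0$ up to $\sigma_{\mathrm H_n,n}^+$ — handled by \eqref{exp.cap.inj.unt.exit.} with $a$ there equal to $\mathrm H_n$. Adding these and summing over cycles, then telescoping, produces the integrand $Z_q(w)-\big(\overline Z_q(w)+\Psi'(0+)/q\big)qW_q(w)/Z_q(w)$ weighted by $(Z_q(x)/Z_q(w))^{1/(1-\ell)}$.

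\textbf{Main obstacle.} The delicate point is the telescoping/interchange step: one must justify that the infinite sum over cycles converges and that the renewal-type identities can be iterated, which rests on $\sigma_{0,n}^-\uparrow\infty$ and $\sigma_{\mathrm H_n,n}^+\uparrow\infty$ a.s.\ (asserted in the construction) together with a geometric-type contraction in the discount factors — i.e.\ $\mathbb E_x[\mathrm e^{-q\sigma_{\mathrm H_n,n}^+}]\to0$ when $q>0$, and for $q=0$ a separate transience/drift argument using $\Psi'(0+)$. I would also need care in bookkeeping the exact level at which each phase restarts ($0$ after a down-crossing, $\mathrm H_n$ after the reflected phase) so that the composition of the one-phase kernels genuinely matches the claimed closed forms; this is precisely where the ``new and transparent excursion-theoretic argument'' does the heavy lifting, replacing the recursive path construction of \cite{Albrecher2014b} by a direct computation of the relevant excursion measures of $X$ away from its running supremum and infimum.
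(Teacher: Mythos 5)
Your starting point is the same as the paper's: decompose a trajectory of $R$ at the first switching epoch $\sigma_{0}^{-}$, treat the taxed phase with \eqref{exi.pro.}, \eqref{e.d.a.t.}, \eqref{12}, \eqref{14} and the reflected phase with \eqref{two.side.exit.}, \eqref{exp.cap.inj.unt.exit.}, and glue them via the strong Markov property. Where you diverge is in how the resulting renewal-type relation is resolved. You propose either iterating/telescoping over the infinite sequence of cycles (which, as you yourself flag, requires a geometric contraction and a $q=0$ transience argument) or a guess-and-verify of the candidate in the renewal equation. The paper instead stops after a \emph{single} Markov step: plugging \eqref{12} with $\hbar(s)=f_a(s)/Z_q(s)$ (resp.\ $g_a(s)/Z_q(s)$, and the analogous combination for $r_a$) into the conditioning on $\sigma_0^-$ gives a closed Volterra-type integral equation in $x$ with the unknown appearing under the integral. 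Differentiating that equation in $x$ produces a first-order linear ODE, e.g.\ $f_a'(x)=\tfrac{1}{1-\ell}\tfrac{qW_q(x)}{Z_q(x)}f_a(x)$, which is solved directly with the boundary conditions $f_a(a)=1$, $g_a(a)=0$, $r_a(a)=0$. This sidesteps the convergence and bookkeeping issues you identify as your main obstacle: no infinite telescoping is needed, and the dependence on the random level $\mathrm H_1$ is absorbed once and for all by the potential-type identity \eqref{12}. Your route is viable in spirit, but as written the telescoping step is not justified, while the guess-and-verify would still require the same integral manipulations the ODE solves automatically; the paper's one-step-plus-ODE argument is the cleaner and complete version of what you are reaching for.
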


\begin{proof}
From \eqref{exi.pro.}, \eqref{12} and the Markov property, it is seen that, for $a\in(0,\infty)$, $x\in(0,a]$,
\begin{eqnarray}
\label{gen.tax}
f_{a}(x)\hspace{-0.3cm}&=&\hspace{-0.3cm}
\mathbb{E}_{x}\left[\mathrm{e}^{-q \sigma_{a}^{+}};\sigma_{a}^{+}<\sigma_{0}^{-}\right]
+\mathbb{E}_{x}\left[\mathrm{e}^{-q \sigma_{0}^{-}}\mathbb{E}_{0}\Big[\mathrm{e}^{-q \rho_{\mathrm{H}_{1}}^{+}}\Big]
\mathbb{E}_{\mathrm{H}_{1}}
\left[\mathrm{e}^{-q \kappa_{a}^{+}}\right];\sigma_{0}^{-}<\sigma_{a}^{+}\right]
\nonumber\\
\hspace{-0.3cm}&=&\hspace{-0.3cm}
\left(\frac{W_{q}(x)}
{W_{q}(a)}\right)^{\frac{1}{1-\ell}}
+\mathbb{E}_{x}\left[\mathrm{e}^{-q \sigma_{0}^{-}}
\frac{f_{a}(\mathrm{H}_{1})}{Z_{q}(\mathrm{H}_{1})}
;\sigma_{0}^{-}<\sigma_{a}^{+}\right]
\nonumber\\
\hspace{-0.3cm}&=&\hspace{-0.3cm}
\left(\frac{W_{q}(x)}
{W_{q}(a)}\right)^{\frac{1}{1-\ell}}
+
\frac{1}{1-\ell}\int_{x}^{a}\frac{f_{a}\left(s\right)}{Z_{q}(s)}
\left(\frac{W_{q}(x)}
{W_{q}(s)}\right)^{\frac{1}{1-\ell}}
\left(\frac{W_{q}^{\prime}(s)}{W_{q}(s)}Z_{q}(s)
-qW_{q}(s)\right)
\mathrm{d}s.
\end{eqnarray}
%Let $\gamma(x)\equiv\ell$ % over $(x,\infty)$
%and then
Differentiating both sides of \eqref{gen.tax} with respect to $x$ gives the following differential equation, for $x\in (0, \infty)$,
\begin{eqnarray}
\label{dif.eq.}
f_{a}^{\prime}(x)\hspace{-0.3cm}&=&\hspace{-0.3cm}
\frac{1}{1-\ell}
\frac{W_{q}^{\prime}(x)}
{W_{q}(x)}
\left(\frac{W_{q}(x)}
{W_{q}(a)}\right)^{\frac{1}{1-\ell}}
+\frac{1}{1-\ell}\frac{W_{q}^{\prime}(x)}
{W_{q}(x)}\left(f_{a}(x)-
\left(\frac{W_{q}(x)}
{W_{q}(a)}\right)^{\frac{1}{1-\ell}}\right)\nonumber\\
\hspace{-0.3cm}&&\hspace{-0.3cm}
-\frac{f_{a}\left(x\right)}{1-\ell}
\left(\frac{W_{q}^{\prime}(x)}{W_{q}(x)}-\frac{qW_{q}(x)}{Z_{q}(x)}\right)
\nonumber\\
\hspace{-0.3cm}&=&\hspace{-0.3cm}
\frac{1}{1-\ell}
\frac{qW_{q}(x)}{Z_{q}(x)}f_{a}\left(x\right)
,
\end{eqnarray}
with boundary condition $f_{a}(a)=1$. Solving \eqref{dif.eq.}, we obtain the first identity in \eqref{two.sid.tax.inj.}.
By \eqref{e.d.a.t.} and an argument similar to the one used in deriving \eqref{gen.tax}, we have
\begin{eqnarray}
\label{gen.tax.1}
g_{a}(x)
\hspace{-0.3cm}&=&\hspace{-0.3cm}
\mathbb{E}_{x}\left[\int_{0}^{\sigma_{a}^{+}\wedge\sigma_{0}^{-}}\mathrm{e}^{-q t}\ell\,\mathrm{d}\overline{X}(t)\right]
+\mathbb{E}_{x}\left[\mathrm{e}^{-q \sigma_{0}^{-}}
\frac{g(\mathrm{H}_{1})}{Z_{q}(\mathrm{H}_{1})}
;\sigma_{0}^{-}<\sigma_{a}^{+}\right]
\nonumber\\
\hspace{-0.3cm}&=&\hspace{-0.3cm}
\frac{\ell}{1-\ell}\int_{x}^{a}
\left(\frac{W_{q}(x)}
{W_{q}(s)}\right)^{\frac{1}{1-\ell}}
\mathrm{d}s
+
\frac{1}{1-\ell}\int_{x}^{a}\frac{g\left(s\right)}{Z_{q}(s)}
\left(\frac{W_{q}(x)}
{W_{q}(s)}\right)^{\frac{1}{1-\ell}}
\nonumber\\
&&\quad\times
\left(\frac{W_{q}^{\prime}(s)}{W_{q}(s)}Z_{q}(s)
-qW_{q}(s)\right)
\mathrm{d}s
,\qquad x\in(0,\infty),\quad a\in(x,\infty).
\end{eqnarray}
%Let $\gamma(x)\equiv\ell$ % over $(x,\infty)$
%and then
Differentiating both sides of \eqref{gen.tax.1} with respect to $x$ gives the following differential equation
\begin{eqnarray}
\label{dif.eq.1}
g_{a}^{\prime}(x)\hspace{-0.3cm}&=&\hspace{-0.3cm}
\frac{1}{1-\ell}\frac{W_{q}^{\prime}(x)}
{W_{q}(x)}g(x)
-\frac{\ell}{1-\ell}
-
\frac{g\left(x\right)}{1-\ell}
\left(\frac{W_{q}^{\prime}(x)}{W_{q}(x)}-\frac{qW_{q}(x)}{Z_{q}(x)}\right)
\nonumber\\
\hspace{-0.3cm}&=&\hspace{-0.3cm}
\frac{1}{1-\ell}
\frac{qW_{q}(x)}{Z_{q}(x)}g\left(x\right)-\frac{\ell}{1-\ell}
,\qquad x\in(0,\infty),
\end{eqnarray}
with boundary condition $g_{a}(a)=0$. Solving \eqref{dif.eq.1} gives the second identity in \eqref{two.sid.tax.inj.}.\smallskip

Combining \eqref{exp.cap.inj.unt.exit.},
\eqref{12} and \eqref{14} yields, for $a\in(0,\infty)$,
\begin{eqnarray}
\label{capt.inj.}
r_{a}(x)
\hspace{-0.3cm}&=&\hspace{-0.3cm}
\mathbb{E}_{x}\left[\mathrm{e}^{-q \sigma_{0}^{-}}\left|U(\sigma_{0}^{-})\right|;\sigma_{0}^{-}<\sigma_{a}^{+}\right]
+\mathbb{E}_{x}\left[\mathrm{e}^{-q \sigma_{0}^{-}}
\mathbb{E}\Big[\int_{0}^{\rho_{\mathrm{H}_{1}}^{+}}\mathrm{e}^{-q t}\mathrm{d}(-\underline{X}(t)\wedge0))\Big]
;\sigma_{0}^{-}<\sigma_{a}^{+}\right]
\nonumber\\
&&\hspace{-0.3cm}
+\mathbb{E}_{x}\left[\mathrm{e}^{-q \sigma_{0}^{-}}
\frac{r_{a}(\mathrm{H}_{1})}
{Z_{q}(\mathrm{H}_{1})}
;\sigma_{0}^{-}<\sigma_{a}^{+}\right]
\nonumber\\
\hspace{-0.3cm}&=&\hspace{-0.3cm}
\frac{1}{1-\ell}\int_{x}^{a}
\left(\frac{W_{q}(x)}
{W_{q}(s)}\right)^{\frac{1}{1-\ell}}
\left(Z_{q}(s)-\Psi^{\prime}(0+)W_{q}(s)-\frac{\overline{Z}_{q}(s)-\Psi^{\prime}(0+)
\overline{W}_{q}(s)}{W_{q}(s)}W_{q}^{\prime}(s)\right)\mathrm{d}s
\nonumber\\
\hspace{-0.3cm}&&\hspace{-0.3cm}
+\frac{1}{1-\ell}\int_{x}^{a}
\left(\frac{W_{q}(x)}
{W_{q}(s)}\right)^{\frac{1}{1-\ell}}
\left(
-\frac{\Psi^{\prime}(0+)}{q}+\frac{\overline{Z}_{q}(s)+\frac{\Psi^{\prime}(0+)}{q}}{Z_{q}(s)}
+\frac{r_{a}(s)}{Z_{q}(s)}
\right)
\nonumber\\
\hspace{-0.3cm}&&\hspace{0.3cm}
\times\left(\frac{W_{q}^{\prime}(s)}{W_{q}(s)}Z_{q}(s)-qW_{q}(s)\right)
\mathrm{d}s,\qquad x\in(0,a].
\end{eqnarray}
Differentiating both sides of \eqref{capt.inj.} with respect to $x$ gives the following differential equation
\begin{eqnarray}
\label{capt.inj.diff.}
r_{a}^{\prime}(x)
\hspace{-0.3cm}&=&\hspace{-0.3cm}
\frac{1}{1-\ell}\frac{W_{q}^{\prime}(x)}
{W_{q}(x)}r_{a}(x)
-
\frac{1}{1-\ell}
\left(Z_{q}(x)-\Psi^{\prime}(0+)W_{q}(x)-\frac{\overline{Z}_{q}(x)-\Psi^{\prime}(0+)
\overline{W}_{q}(x)}{W_{q}(x)}W_{q}^{\prime}(x)\right)
\nonumber\\
\hspace{-0.3cm}&&\hspace{-0.3cm}
-
\frac{1}{1-\ell}
\left(
-\frac{\Psi^{\prime}(0+)}{q}+\frac{\overline{Z}_{q}(x)+\frac{\Psi^{\prime}(0+)}{q}}{Z_{q}(x)}
+\frac{r_{a}(x)}{Z_{q}(x)}
\right)\left(\frac{W_{q}^{\prime}(x)}{W_{q}(x)}Z_{q}(x)-qW_{q}(x)\right)
\nonumber\\
\hspace{-0.3cm}&=&\hspace{-0.3cm}
\frac{1}{1-\ell}
\frac{qW_{q}(x)}{Z_{q}(x)}r_{a}\left(x\right)
-\frac{1}{1-\ell}\left(Z_{q}(x)-\left(\overline{Z}_{q}(x)+\frac{\Psi^{\prime}(0+)}{q}\right)
\frac{qW_{q}(x)}{Z_{q}(x)}\right),\quad x\in(0,a],
\end{eqnarray}
with boundary condition $r_{a}(a)=0$.
Solving \eqref{capt.inj.diff.} gives \eqref{e.d.a.t.01}. The proof is completed.
\end{proof}

\medskip
By Proposition \ref{pro4.1}, the expected total discounted tax payments minus the expected total discounted costs of capital injection is given by
\begin{eqnarray}
\label{psibar}
\overline{\psi}(x)\hspace{-0.3cm}&=&\hspace{-0.3cm}
g_{\infty}(x)-\varphi \,r_{\infty}(x)
\nonumber\\
\hspace{-0.3cm}&=&\hspace{-0.3cm}
-
\frac{\varphi}{1-\ell}\int_x^{\infty}
\left(Z_{q}(w)-\left(\overline{Z}_{q}(w)+\frac{\Psi^{\prime}(0+)}{q}\right)
\frac{qW_{q}(w)}{Z_{q}(w)}\right)
\left(\frac{Z_{q}(x)}{Z_{q}(w)}\right)^{\frac{1}{1-\ell}}\mathrm{d}w
\nonumber\\
\hspace{-0.3cm}&&\hspace{-0.3cm}
+\frac{\ell}{1-\ell}\int_x^{\infty}\left(\frac{Z_{q}(x)}{Z_{q}(w)}\right)^{\frac{1}{1-\ell}}\mathrm{d}w
,
\quad  x\in(0,\infty),\nonumber
\end{eqnarray}
where $\varphi\in(1,\infty)$ is the cost per unit amount of capital injected.
Define
\begin{eqnarray}
\label{v.expr.1}
\overline{\upsilon}(x):=
\overline{\psi}(x)
-\varphi\left(\overline{Z}_{q}(x)+\frac{\Psi^{\prime}(0+)}{q}\right),
\qquad  x\in(0,\infty).
\end{eqnarray}

If tax is collected after the risk process reaches the level $a\in(0,\infty)$, then
the expected total discounted delayed tax payments minus the expected total discounted cost of capital injection
can be expressed as
\begin{eqnarray}
\label{obj.func.}
\overline{\phi}(x;a)
\hspace{-0.3cm}&=&\hspace{-0.3cm}
\mathbb{E}_{x}\left[\mathrm{e}^{-q \rho_{a}^{+}}\right]
\overline{\psi}(a)
-\varphi \mathbb{E}_{x}\left[\int_{0}^{\rho_{a}^{+}}\mathrm{e}^{-q t}\mathrm{d}\left(-\underline{X}_{t}\wedge 0\right)\right]
\nonumber\\
\hspace{-0.3cm}&=&\hspace{-0.3cm}
\frac{Z_{q}(x)}{Z_{q}(a)}\overline{\psi}(a)
-\varphi \int_x^{a}\frac{Z_{q}(x)}{Z_{q}(w)}
\left(Z_{q}(w)-\left(\overline{Z}_{q}(w)+\frac{\Psi^{\prime}(0+)}{q}\right)
\frac{qW_{q}(w)}{Z_{q}(w)}\right)
\mathrm{d}w
\nonumber\\
\hspace{-0.3cm}&=&\hspace{-0.3cm}
\frac{Z_{q}(x)}{Z_{q}(a)}\overline{\psi}(a)
-\varphi\left(
\left(\overline{Z}_{q}(a)+\frac{\Psi^{\prime}(0+)}{q}\right)
\frac{Z_{q}(x)}{Z_{q}(a)}
-\left(\overline{Z}_{q}(x)+\frac{\Psi^{\prime}(0+)}{q}\right)\right)
\nonumber\\
\hspace{-0.3cm}&=&\hspace{-0.3cm}
\frac{Z_{q}(x)}{Z_{q}(a)}
\overline{\upsilon}(a)
+\varphi\left(\overline{Z}_{q}(x)+\frac{\Psi^{\prime}(0+)}{q}\right),
\end{eqnarray}
where we have used the Markov property, \eqref{two.side.exit.}, and \eqref{e.d.a.t.01} with $\ell=0$.

For the convenience of presentation, we define two auxiliary functions as
\begin{eqnarray}
\label{define.upsilon.1}
\overline{h}(a):=\overline{\upsilon}(a)-\overline{V}(a)\left(1-\varphi Z_{q}(a)\right)
\quad\mbox{and}\quad \overline{V}(b):=\frac{Z_{q}(b)}{Z_{q}^{\prime}(b)}.
\end{eqnarray}

\medskip
The following result characterizes the global maximum point and the global maximum value of the function $a\mapsto\overline{\phi}(x;a)$ for fixed $x$.

\medskip
%*************************************************
\begin{thm}
\label{th.4.1}
When $\overline{\upsilon}(0)>\overline{V}(0)(1-\varphi Z_{q}(0))$, there is a unique positive solution $a^{+}\in(0,\infty)$ of the following equation
\begin{eqnarray}
\label{equ.det.argmax.1}
\hspace{-0.3cm}&&\hspace{-0.3cm}
\overline{h}(a)=0
%\Leftrightarrow\overline{\upsilon}(a)-\overline{V}(a)\left(1-\varphi Z_{q}(a)\right)=0
\Leftrightarrow
\overline{\upsilon}^{\prime}(a)=1-\varphi Z_{q}(a),
\end{eqnarray}
and the function $\overline{\phi}(x;a)$ (of $a$) attains its global maximum at $a^{*}=a^{+}$.
Otherwise, when $\overline{\upsilon}(0)\leq \overline{V}(0)(1-\varphi Z_{q}(0))$, \eqref{equ.det.argmax.1} has no positive solution, and the function $\overline{\phi}(x;a)$ (of $a$) attains its global maximum at $a^{*}=0$.

Moreover, we have
\begin{eqnarray}
\label{}
\hspace{-0.3cm}&&\hspace{-0.3cm}
\overline{\phi}(x;a^{*})=\varphi\left(\overline{Z}_{q}(x)+\frac{\Psi^{\prime}(0+)}{q}\right)+\frac{Z_{q}(x)}{Z_{q}^{\prime}(a^{*})}\left(1-\varphi Z_{q}(a^{*})\right).
\nonumber
\end{eqnarray}
\end{thm}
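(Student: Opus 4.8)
The plan is to run the verification argument behind Theorem~\ref{th.3.1}, but now with $Z_{q}$, $\overline{Z}_{q}$ and $Z_{q}^{\prime}=qW_{q}$ in the roles played there by $W_{q}$, $Z_{q}$ and $W_{q}^{\prime}$, and with $\varphi$ in place of $S$. What makes the present case more transparent is that $\varphi\in(1,\infty)$, together with the monotonicity of $Z_{q}$ and $Z_{q}(a)\geq Z_{q}(0)=1$, forces $1-\varphi Z_{q}(a)<0$ for \emph{every} $a\in[0,\infty)$; this is exactly the analogue of the sign condition $1-SqW_{q}(b)>0$ that delineated {\bf Case A} in Theorem~\ref{th.3.1}, so no separate {\bf Case B} discussion is needed. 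I would work throughout with the representation $\overline{\phi}(x;a)=\frac{Z_{q}(x)}{Z_{q}(a)}\overline{\upsilon}(a)+\varphi\bigl(\overline{Z}_{q}(x)+\frac{\Psi^{\prime}(0+)}{q}\bigr)$ from \eqref{obj.func.}. Letting $a\to\infty$ in the differential equations \eqref{dif.eq.1} and \eqref{capt.inj.diff.} and combining $\overline{\psi}=g_{\infty}-\varphi\,r_{\infty}$ with \eqref{v.expr.1} and $\overline{Z}_{q}^{\prime}=Z_{q}$ yields
\[
\overline{\upsilon}^{\prime}(a)=\frac{1}{1-\ell}\,\frac{\overline{\upsilon}(a)}{\overline{V}(a)}+\frac{\ell}{1-\ell}\bigl(\varphi Z_{q}(a)-1\bigr),
\]
and plugging this into $\partial_{a}\overline{\phi}(x;a)$ gives, exactly as for \eqref{daoshu},
\[
\frac{\partial\overline{\phi}(x;a)}{\partial a}=\frac{\ell}{1-\ell}\,\frac{Z_{q}(x)\,Z_{q}^{\prime}(a)}{[Z_{q}(a)]^{2}}\;\overline{h}(a).
\]
Thus the critical points of $a\mapsto\overline{\phi}(x;a)$ are precisely the zeros of $\overline{h}$, and, using the formula for $\overline{\upsilon}^{\prime}$, $\overline{h}(a)=0$ is equivalent to $\overline{\upsilon}^{\prime}(a)=1-\varphi Z_{q}(a)$, i.e.\ to \eqref{equ.det.argmax.1}.

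Next, mimicking the computation of \eqref{jixian0}, I would show $\lim_{a\to\infty}\overline{h}(a)=\frac{\ell-1}{\Phi(q)}<0$, so that $\overline{\phi}(x;\cdot)$ is strictly decreasing on some half-line $[\overline{a},\infty)$. I expect this to be the main obstacle: unlike in Theorem~\ref{th.3.1}, several constituents of $\overline{h}(a)$ — the injection integral defining $\varphi\,r_{\infty}(a)$, the term $\varphi\,\overline{Z}_{q}(a)$, and $\varphi\,\overline{V}(a)Z_{q}(a)=\varphi[Z_{q}(a)]^{2}/(qW_{q}(a))$ — each diverge as $a\to\infty$, and the proof rests on regrouping them so the divergences cancel, leaving the finite value $\frac{\ell-1}{\Phi(q)}$. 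The cancellation would be carried out via \eqref{phi.0}, \eqref{phi}, \eqref{phi.00}, the $Z_{q}$-analogue of \eqref{1} (plus a companion limit for the integral representing $r_{\infty}(a)$), and the asymptotics $W_{q}(a)\sim W_{q}^{\prime}(a)/\Phi(q)$ and $\overline{Z}_{q}(a)\sim Z_{q}(a)/\Phi(q)$.

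The core of the argument is the classification of critical points. As for \eqref{erjiedao}, at a critical point $a_{0}$ one gets $\left.\partial_{a}^{2}\overline{\phi}(x;a)\right|_{a=a_{0}}=\frac{\ell}{1-\ell}\frac{Z_{q}(x)}{Z_{q}(a_{0})}\,\overline{f}(a_{0})$, where $\overline{f}(a):=\frac{1-\varphi Z_{q}(a)}{Z_{q}^{\prime}(a)}Z_{q}^{\prime\prime}(a)+\varphi Z_{q}^{\prime}(a)$, and since $Z_{q}^{\prime}=qW_{q}$,
\[
\overline{f}^{\prime}(a)=\bigl(1-\varphi Z_{q}(a)\bigr)\frac{Z_{q}^{\prime}(a)Z_{q}^{\prime\prime\prime}(a)-[Z_{q}^{\prime\prime}(a)]^{2}}{[Z_{q}^{\prime}(a)]^{2}}=\bigl(1-\varphi Z_{q}(a)\bigr)\frac{q^{2}\bigl(W_{q}(a)W_{q}^{\prime\prime}(a)-[W_{q}^{\prime}(a)]^{2}\bigr)}{[Z_{q}^{\prime}(a)]^{2}}.
\]
Both factors are negative — the first by the observation above, the second by \eqref{W} — hence $\overline{f}^{\prime}(a)>0$ on $(0,\infty)$, i.e.\ $\overline{f}$ is strictly increasing; note this uses only \eqref{W}, not a completely monotone density. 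With $\overline{f}$ increasing, the reasoning of claims {\bf A1} and {\bf A2} applies verbatim: a critical point with $\partial_{a}^{2}\overline{\phi}>0$ would, by the eventual decrease of $\overline{\phi}(x;\cdot)$, be followed by one with $\partial_{a}^{2}\overline{\phi}<0$, so $\overline{f}$ would be positive at a point preceding one where it is negative — impossible; and a critical point $a_{0}$ with $\partial_{a}^{2}\overline{\phi}=0$ (so $\overline{f}(a_{0})=0$) is excluded because the {\bf A2} computation — with \eqref{W} in place of \eqref{sanpie}, the relevant inequality reversing since $1-\varphi Z_{q}<0$ — shows $\overline{h}$ has a local minimum equal to $0$ at $a_{0}$, whence $\overline{h}>0$ just to the right of $a_{0}$ and a later critical point of $\overline{\phi}(x;\cdot)$ appears at which $\overline{f}\leq 0$, contradicting $\overline{f}>\overline{f}(a_{0})=0$. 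Consequently $\overline{h}$ has at most one zero in $[0,\infty)$ and every critical point of $\overline{\phi}(x;\cdot)$ is a maximum.

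The dichotomy and the value of $\overline{\phi}(x;a^{*})$ then follow as at the end of the proof of Theorem~\ref{th.3.1}. If $\overline{\upsilon}(0)>\overline{V}(0)(1-\varphi Z_{q}(0))$, i.e.\ $\overline{h}(0)>0$, then together with $\lim_{a\to\infty}\overline{h}(a)<0$ the function $\overline{h}$ has a unique zero $a^{+}\in(0,\infty)$, with $\overline{h}>0$ on $[0,a^{+})$ and $\overline{h}<0$ on $(a^{+},\infty)$, so $\overline{\phi}(x;\cdot)$ increases then decreases and $a^{*}=a^{+}$; if instead $\overline{\upsilon}(0)\leq\overline{V}(0)(1-\varphi Z_{q}(0))$, then $\overline{h}<0$ on $(0,\infty)$, $\overline{\phi}(x;\cdot)$ is strictly decreasing, and $a^{*}=0$. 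Finally, inserting $\overline{\upsilon}(a^{*})=\overline{V}(a^{*})\bigl(1-\varphi Z_{q}(a^{*})\bigr)=\frac{Z_{q}(a^{*})}{Z_{q}^{\prime}(a^{*})}\bigl(1-\varphi Z_{q}(a^{*})\bigr)$, which holds at the optimiser, into the representation of $\overline{\phi}$ gives $\overline{\phi}(x;a^{*})=\varphi\bigl(\overline{Z}_{q}(x)+\frac{\Psi^{\prime}(0+)}{q}\bigr)+\frac{Z_{q}(x)}{Z_{q}^{\prime}(a^{*})}\bigl(1-\varphi Z_{q}(a^{*})\bigr)$.
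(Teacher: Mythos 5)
Your proposal follows essentially the same route as the paper: it uses the same derivative formula for $\overline{\phi}(x;a)$ to reduce the problem to zeros of $\overline{h}$, establishes $\lim_{a\to\infty}\overline{h}(a)=\frac{\ell-1}{\Phi(q)}<0$ via regrouping/integration-by-parts, shows $\overline{f}'>0$ from \eqref{W} (correctly noting this only needs \eqref{W}, not a completely monotone density), and rules out minima and saddle points by the same arguments as Case~A of Theorem~\ref{th.3.1}, exploiting $1-\varphi Z_q(a)<0$. Your observation that $\varphi>1$ together with $Z_q\geq 1$ removes the need for a Case~B analysis is exactly what underlies the paper's remark that the completely-monotone-density assumption can be dropped here.
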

%**********************************************
\medskip

\begin{proof}
Firstly, we show a result that is useful in our main proof:
\begin{eqnarray}
\label{37}
\lim\limits_{a\rightarrow \infty}
\overline{h}(a)= \frac{\ell-1}{\Phi(q)}<0.
\end{eqnarray}
Using integration by parts, one can get
\begin{eqnarray}
\hspace{-0.3cm}&&\hspace{-0.3cm}
-\frac{1}{1-\ell}\int_a^{\infty}
\left(\overline{Z}_{q}(w)+\frac{\Psi^{\prime}(0+)}{q}\right)
\frac{qW_{q}(w)}{Z_{q}(w)}
\left(\frac{Z_{q}(a)}{Z_{q}(w)}\right)^{\frac{1}{1-\ell}}\mathrm{d}w
\nonumber\\
\hspace{-0.3cm}&=&\hspace{-0.3cm}
\left.\left(\overline{Z}_{q}(w)+\frac{\Psi^{\prime}(0+)}{q}\right)
\left(\frac{Z_{q}(a)}{Z_{q}(w)}\right)^{\frac{1}{1-\ell}}\right|_{w=a}^{\infty}
-
\int_a^{\infty}
Z_{q}(w)
\left(\frac{Z_{q}(a)}{Z_{q}(w)}\right)^{\frac{1}{1-\ell}}\mathrm{d}w
\nonumber\\
\hspace{-0.3cm}&=&\hspace{-0.3cm}
-\overline{Z}_{q}(a)-\frac{\Psi^{\prime}(0+)}{q}
-
\int_a^{\infty}
Z_{q}(w)
\left(\frac{Z_{q}(a)}{Z_{q}(w)}\right)^{\frac{1}{1-\ell}}\mathrm{d}w,
\quad  a\in(0,\infty),
\nonumber
\end{eqnarray}
and
\begin{eqnarray}
\label{}
\hspace{-0.3cm}&&\hspace{-0.3cm}-\frac{\ell}{1-\ell}\int_a^{\infty} Z_{q}(w)\left(\frac{Z_{q}(a)}{Z_{q}(w)}\right)^{\frac{1}{1-\ell}}\mathrm{d}w
\nonumber\\
\hspace{-0.3cm}&=&\hspace{-0.3cm}
\int_a^{\infty} \frac{Z_{q}(w)\left(Z_{q}(a)\right)^{\frac{1}{1-\ell}}}{Z_{q}^{\prime}(w)}
\mathrm{d}\left(Z_{q}(w)\right)^{-\frac{1}{1-\ell}+1}
\nonumber\\
\hspace{-0.3cm}&=&\hspace{-0.3cm}
-\overline{V}(a) Z_{q}(a)-
\int_a^{\infty}Z_{q}(w) \left(\frac{Z_{q}(w)}{Z_{q}^{\prime}(w)}\right)^{\prime}\left(\frac{Z_{q}(a)}{Z_{q}(w)}\right)^{\frac{1}{1-\ell}}
\mathrm{d}w
,
\quad  a\in(0,\infty).
\nonumber
\end{eqnarray}
Hence, by \eqref{v.expr.1} and \eqref{define.upsilon.1} we have
\begin{eqnarray}
\label{36}
\overline{h}(a)
\hspace{-0.3cm}&=&\hspace{-0.3cm}
\frac{\ell}{1-\ell}\int_a^{\infty}\left(\frac{Z_{q}(a)}{Z_{q}(w)}\right)^{\frac{1}{1-\ell}}\mathrm{d}w-\frac{Z_{q}(a)}{Z_{q}^{\prime}(a)}
\nonumber\\
\hspace{-0.3cm}&&\hspace{-0.3cm}
-
\varphi\int_a^{\infty}Z_{q}(w) \frac{q\left(W_{q}(w)\right)^{2}/W_{q}^{\prime}(w)-Z_{q}(w)}{q\left(W_{q}(w)\right)^{2}/W_{q}^{\prime}(w)}\left(\frac{Z_{q}(a)}{Z_{q}(w)}\right)^{\frac{1}{1-\ell}}
\mathrm{d}w.
\end{eqnarray}
Using \eqref{phi.0}, \eqref{phi.00} and L'H\^opital's rule, we get
\begin{eqnarray}
\lim\limits_{a\rightarrow \infty}\frac{\ell}{1-\ell}\int_a^{\infty}
\left(\frac{Z_{q}(a)}{Z_{q}(w)}\right)^{\frac{1}{1-\ell}}
\mathrm{d}w=\frac{\ell}{\Phi(q)},\nonumber
\end{eqnarray}
and
\begin{eqnarray}
\lim\limits_{a\rightarrow \infty}\int_a^{\infty}Z_{q}(w) \frac{q\left(W_{q}(w)\right)^{2}/W_{q}^{\prime}(w)-Z_{q}(w)}{q\left(W_{q}(w)\right)^{2}/W_{q}^{\prime}(w)}\left(\frac{Z_{q}(a)}{Z_{q}(w)}\right)^{\frac{1}{1-\ell}}
\mathrm{d}w=0,\nonumber
\end{eqnarray}
which, together with \eqref{phi.00} and \eqref{36}, gives \eqref{37}.\smallskip

Then, we show that the function $\bar{h}(a)$ can be used to find critical points of $a\mapsto \overline{\phi}(x;a)$.
By the expression of $\overline{\phi}(x;a)$ given by \eqref{obj.func.}, we have
\begin{eqnarray}
\label{daoshu.1}
\frac{\partial \overline{\phi}(x;a)}{\partial a}\hspace{-0.3cm}&=&\hspace{-0.3cm}\frac{Z_{q}(x)}{\left(Z_{q}(a)\right)^{2}}\left(Z_{q}(a)\overline{\upsilon}^{\prime}(a)-Z_{q}^{\prime}(a)
\,\overline{\upsilon}(a)\right)
\nonumber\\
\hspace{-0.3cm}&=&\hspace{-0.3cm}\frac{\ell}{1-\ell}\frac{Z_{q}(x)Z_{q}^{\prime}(a)}{\left(Z_{q}(a)\right)^{2}}
\left[\overline{\upsilon}(a)-\overline{V}(a)\left(1-\varphi Z_{q}(a)\right)\right]
\nonumber\\
\hspace{-0.3cm}&=&\hspace{-0.3cm}\frac{\ell}{1-\ell}\frac{Z_{q}(x)Z_{q}^{\prime}(a)}{\left(Z_{q}(a)\right)^{2}}
\overline{h}(a)
,
\end{eqnarray}
where the following expression for $\overline{\upsilon}^{\prime}(a)$ is employed
\begin{eqnarray}
\label{}
\overline{\upsilon}^{\prime}(a)
\hspace{-0.3cm}&=&\hspace{-0.3cm}
\frac{1}{1-\ell}\frac{\overline{\upsilon}(a)}{\overline{V}(a)}+\frac{\ell}{1-\ell}\left(\varphi Z_{q}(a)-1\right).
\nonumber
\end{eqnarray}
According to {(\ref{daoshu.1})}, $\frac{\partial \overline{\phi}(x;a)}{\partial a}=0$ is equivalent to \eqref{equ.det.argmax.1},
%\begin{eqnarray}
%\label{equ.det.argmax.1}
%\overline{\upsilon}(a)-\overline{V}(a)\left(1-\varphi Z_{q}(a)\right)=0\Leftrightarrow
%\overline{\upsilon}^{\prime}(a)=1-\varphi Z_{q}(a),
%\end{eqnarray}
i.e., $a_{0}$ is a critical point of $a\mapsto \overline{\phi}(x;a)$ if $a_{0}$ solves \eqref{equ.det.argmax.1}.
To specify the nature of this critical point $a_{0}$, we examine the second partial derivative of $\bar{\phi}(x; a)$,
\begin{eqnarray}
\label{erjiedao.1}
\left.\frac{\partial^{2} \overline{\phi}(x;a)}{\partial a^{2}}\right|_{a=a_{0}}
%\hspace{-0.3cm}&=&\hspace{-0.3cm}
%\frac{\ell}{1-\ell}\frac{W_{q}(x){W_{q}^{\prime\prime}(a_{0})\overline{\upsilon}(a_{0})}}{[W_{q}(a_{0})]^{2}}
%+\frac{\ell}{1-\ell}\frac{SqW_{q}(x)W_{q}^{\prime}(a_{0})}{W_{q}(a_{0})}
%\nonumber\\
\hspace{-0.3cm}&=&\hspace{-0.3cm}
\frac{\ell}{1-\ell}\frac{Z_{q}(x)}{Z_{q}(a_{0})}\left(\frac{1-\varphi Z_{q}(a_{0})}{Z_{q}^{\prime}(a_{0})}Z_{q}^{\prime\prime}(a_{0})+\varphi Z_{q}^{\prime}(a_{0})\right)
=\frac{\ell}{1-\ell}\frac{Z_{q}(x)}{Z_{q}(a_{0})}\bar{f}(a_0),
\end{eqnarray}
where
$$\overline{f}(a):=\frac{1-\varphi Z_{q}(a)}{Z_{q}^{\prime}(a)}Z_{q}^{\prime\prime}(a)+\varphi Z_{q}^{\prime}(a),\quad a\in(0,\infty).$$
Then, by \eqref{W}, we have
\begin{eqnarray}
\label{fbaryipie}
\overline{f}^{\prime}(a)=\left(1-\varphi Z_{q}(a)\right)\frac{W_{q}(a)W_{q}^{\prime\prime}(a)-\left(W_{q}^{\prime}(a)\right)^{2}}{\left(W_{q}(a)\right)^{2}}+\varphi qW'_q(a)>0,\quad a\in(0,\infty).
\end{eqnarray}
Hence, by \eqref{37} and a similar argument as that used in the proof of case {\bf A} of Theorem \ref{th.3.1}, one can verify that the map $a\mapsto \overline{\phi}(x;a)$ cannot have a local minimum point or saddle point in $[0,\infty)$.
Indeed,

\begin{itemize}

    \item if $a_{0}$ is a local minimum point of the map $a\mapsto \overline{\phi}(x;a)$, then there should be $a_{0}^{\prime}\in(a_{0},\infty)$ such that $\overline{f}(a_{0})>0$ and $\overline{f}(a_{0}^{\prime})<0$, which contradicts to \eqref{fbaryipie};

\item if $a_{0}$ is a saddle point of the map $a\mapsto \overline{\phi}(x;a)$,
then by \eqref{W} and
$$Z_{q}^{\prime\prime}(a_{0})=-\frac{\varphi \left(Z_{q}^{\prime}(a_{0})\right)^{2}}{1-\varphi Z_{q}(a_{0})},\qquad \overline{V}^{\prime}(a_{0})=\frac{1}{1-\varphi Z_{q}(a_{0})},$$
one can obtain $\overline{h}^{\prime}(a_{0})=0$ and
\begin{eqnarray}
\overline{h}^{\prime\prime}(a_{0})
\hspace{-0.3cm}&=&\hspace{-0.3cm}
\frac{Z_{q}^{\prime}(a_{0})Z_{q}^{\prime\prime\prime}(a_{0})\left(1-\varphi Z_{q}(a_{0})\right)^{2}-\varphi^{2}[Z_{q}^{\prime}(a_{0})]^{4}}
{\left(1-\varphi Z_{q}(a_{0})\right)[Z_{q}^{\prime}(a_{0})]^{3}/Z_{q}(a_{0})}
\nonumber\\
\hspace{-0.3cm}&>&\hspace{-0.3cm}
\frac{\left(Z_{q}^{\prime\prime}(a_{0})\right)^{2}\left(1-\varphi Z_{q}(a_{0})\right)^{2}-\varphi^{2}[Z_{q}^{\prime}(a_{0})]^{4}}
{\left(1-\varphi Z_{q}(a_{0})\right)[Z_{q}^{\prime}(a_{0})]^{3}/Z_{q}(a_{0})}=0,
\nonumber
\end{eqnarray}
implying that $\overline{h}(a)$ reaches a local minimum value of $0$ at $a_{0}$. In this case there must exist $a_{0}^{\prime}\in(a_{0},\infty)$ such that $a_{0}$ is a maximum point or another saddle point of the map  $a\mapsto\overline{\phi}(x;a)$, i.e.,
$\overline{f}(a_{0})=0$
and $\overline{f}(a_{0}^{\prime})\leq0$, which again contradicts to \eqref{fbaryipie}.
\end{itemize}
We thus conclude that the equation \eqref{equ.det.argmax.1} has at most one positive solution.\smallskip

Finally, we shall examine the existence of any positive solution of the equation \eqref{equ.det.argmax.1}.
\begin{itemize}
    \item When $\overline{\upsilon}(0)>\overline{V}(0)(1-\varphi Z_{q}(0))$, by \eqref{37} one concludes that a positive solution $a^{+}\in(0,\infty)$ of \eqref{equ.det.argmax.1} exists such that
\begin{eqnarray}
\label{maximumf(b00)01}
\hspace{1cm}\overline{\upsilon}(a)>\overline{V}(a)(1-\varphi Z_{q}(a)),\,\,a\in[0,a^{+})\quad \mbox{and} \quad \overline{\upsilon}(a)<\overline{V}(a)(1-\varphi Z_{q}(a)),\,\,a\in(a^{+},\infty),\nonumber
\end{eqnarray}
i.e., the map $a\mapsto\overline{\phi}(x;a)$ reaches its global maximum value at $a^{*}=a^{+}$.

\item When $\overline{\upsilon}(0)<\overline{V}(0)(1-\varphi Z_{q}(0))$, then \eqref{equ.det.argmax.1} has no positive solution and
\begin{eqnarray}
\label{maximumf(b00)02}
\overline{\upsilon}(a)<\overline{V}(a)(1-\varphi Z_{q}(a)),\qquad a\in[0,\infty),
\end{eqnarray}
i.e., the map $a\mapsto\overline{\phi}(x;a)$ reaches its global maximum value at $a^{*}=0$.

\item When $\overline{\upsilon}(0)=\overline{V}(0)(1-\varphi Z_{q}(0))$, \eqref{maximumf(b00)02} holds true for all $a\in(0,\infty)$ and the map $a\mapsto\overline{\phi}(x;a)$ reaches its global maximum value at $a^{*}=0$.
\end{itemize}
The proof is completed.
\end{proof}

\begin{rem}
We would like to emphasize that, in Theorem \ref{th.3.1} the characterization of the optimal tax implementation threshold $b^{*}$ relies on the assumption of the existence of a completely monotone density of the L\'evy measure of $X$, whilst in Theorem \ref{th.4.1}, when capital injection is involved, no such assumption is needed to obtain the criterion of the optimal tax implementation threshold $a^{*}$.
\end{rem}

%++++++++++++++++++++++++++++++++++++++++++++++++++++++++++++++++++++++++++++++++++++++++++++++
\section{Numerical examples on optimal delay of taxation implementation}

\medskip
In this section we consider a Cramer-Lundberg risk model example. Let $X(t)=x+ ct-S(t)$ with $x\ge 0$, and $S(t)$ is a compound Poisson process with rate $\lambda>0$ and an exponential jump distribution $F(x)=1-\mathrm{e}^{-\mu x}$, $\mu>0$. It can be verified that $X$ has a scale function
\begin{eqnarray}
W_q(x)=\frac{A_1(q)}{c}\mathrm{e}^{\theta_1(q)x}-\frac{A_2(q)}{c}\mathrm{e}^{\theta_2(q)x},\qquad x\geq0,\,q\geq0,\label{Ex2_W}
\end{eqnarray}
where $\kappa(q)=\sqrt{(c\mu-\lambda-q)^2+4cq\mu}$, $A_1(q)=\frac{\mu+\theta_1(q)}{\theta_1(q)-\theta_2(q)}=\frac{\lambda+q+c\mu}{2\kappa(q)}+\frac{1}{2}$, $A_2(q)=\frac{\mu+\theta_2(q)}{\theta_1(q)-\theta_2(q)}=\frac{\lambda+q+c\mu}{2\kappa(q)}-\frac{1}{2}$, and
\begin{eqnarray*}
\hspace{-0.2cm}&&\hspace{-0.2cm}\theta_1(q)=\frac{\lambda+q-c\mu+\kappa(q)}{2c},\qquad \theta_2(q)=\frac{\lambda+q-c\mu-\kappa(q)}{2c}.
\end{eqnarray*}
Further, for $x, q\ge 0$,
\begin{eqnarray*}
Z_q(x)\hspace{-0.2cm}&=&\hspace{-0.2cm} 1+q\int^x_0\Big[\frac{A_1(q)}{c}\mathrm{e}^{\theta_1(q)y}-\frac{A_2(q)}{c}\mathrm{e}^{\theta_2(q)y}\Big]\mathrm{d}y\\
\hspace{-0.2cm}&=&\hspace{-0.2cm} \frac{qA_1(q)}{c\theta_1(q)}\mathrm{e}^{\theta_1(q)x}-\frac{qA_2(q)}{c\theta_2(q)}\mathrm{e}^{\theta_2(q)x}.
\end{eqnarray*}
Without loss of generality, we let $c=1.2$, $\mu=1$, $\lambda=1$, and $q=0.05$. In the following we shall present two numerical examples which illustrate the previously discussed two optimal delay of taxation implementation models respectively. \medskip

{\bf Example 1.} We consider the optimal delay of taxation implementation with a terminal value at ruin first. For the purpose of comparison, we consider three levels of tax rate, i.e., $\ell=0.1, 0.2$ and 0.3. To get a concrete idea on how the condition given in Theorem 3.1, i.e., $v(0)>V(0)(1-SqW_q(0))$, behaves for varying $q$ and $S$ values, we show a 3D graph with $\ell=0.1$, $S\in [-5,+5]$ and $q\in (0, 0.05]$ in Figure 1. It can be seen that when $q$ is extremely small ($q<0.004$), the condition is met for certain ranges of positive $S$ values. Otherwise, no positive terminal value can satisfy the condition. It implies that leaving a positive tax payment at the end cannot pair with taxation delay to achieve an maximal total expected discounted tax payments until ruin at normal rates of discounting.

\begin{figure}[H]
\begin{center}
\includegraphics[width=5in]{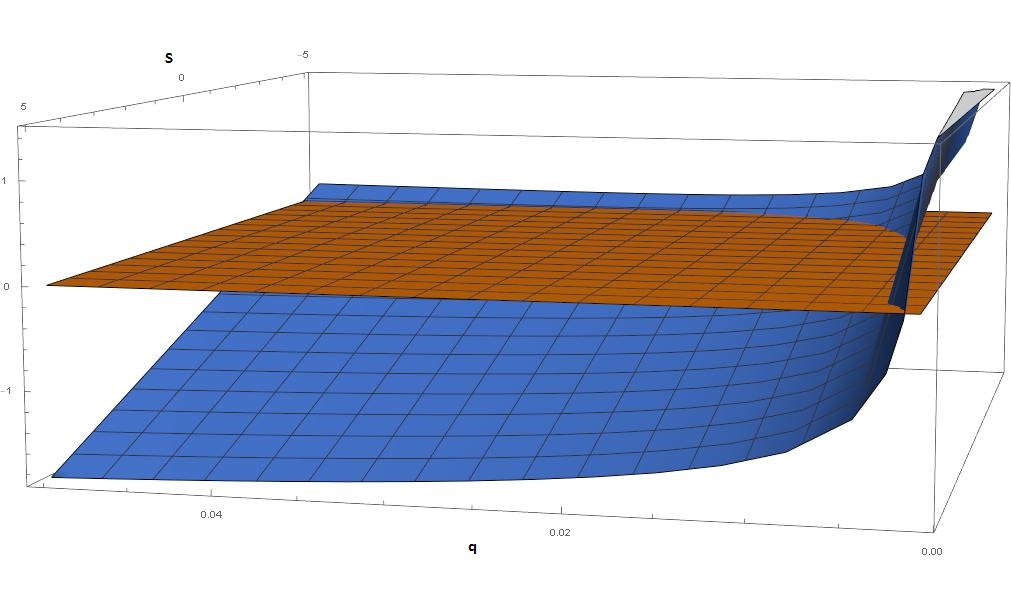}
\caption{Optimal taxation implementation delay for various $S$ values}
\label{figure2}
\end{center}
\end{figure}

According to the condition of positive $b^*$ given in Theorem 3.1, we calculate the following ranges of $S$ values at each assumed $\ell$ level, which are summarized in Table 1 and Table 2. Further, Figure 2 and Figure 3 contain corresponding $b^*$ values for specifically chosen ranges for $S$. Table 1 and Figure 2 confirm the fact that it is worth implementing tax delays only when $S<0$ at $q=0.05$, i.e. the insurance company receives a tax benefit $S$ at ruin. From Table 2 and Figure 3 one can see that when $q=0.002$, both negative and certain ranged positive $S$ values lead to positive $b^*$ which means the delay in taxation implementation is worthwhile. In general, the optimal tax threshold level $b^*$ is a decreasing function of $S$. Also, given $S$, $b^*$ is an increasing function of $\ell$.

\begin{table}[ht]
\centering\caption{Existence of positive $b^*$ at $q=0.05$} \vspace{-.2cm}
\begin{tabular}{| c | c | c | c | }   \hline
$\ell$ &$v(0)$ &$V(0)(1-S q W_q(0))$   &$b^*>0$\\ \hline
0.1 &$0.29630-0.24801 S$    &$1.14286-0.04762 S$    &$S<-4.22$\\
0.2 &$0.55487-0.21900 S$      &$1.14286-0.04762 S$    &$S<-3.43$\\
0.3 &$0.77143-0.18931 S$    &$1.14286-0.04762 S$    &$S<-2.62$\\ \hline
\end{tabular}\label{table-2}
\end{table}

\begin{table}[ht]
\centering\caption{Existence of positive $b^*$ at $q=0.002$} \vspace{-.2cm}
\begin{tabular}{| c | c | c | c | }   \hline
$\ell$ &$v(0)$ &$V(0)(1-S q W_q(0))$   &$b^*>0$\\ \hline
0.1 &$1.75676-0.144583 S$    &$1.1976-0.001996 S$    &$S<3.92$\\
0.2 &$2.87400-0.11451 S$      &$1.1976-0.001996 S$    &$S<14.9$\\
0.3 &$3.36578-0.08521 S$    &$1.1976-0.001996 S$    &$S<26.06$\\ \hline
\end{tabular}\label{table-2}
\end{table}

\begin{figure}[H]
\begin{center}
\includegraphics[width=5in]{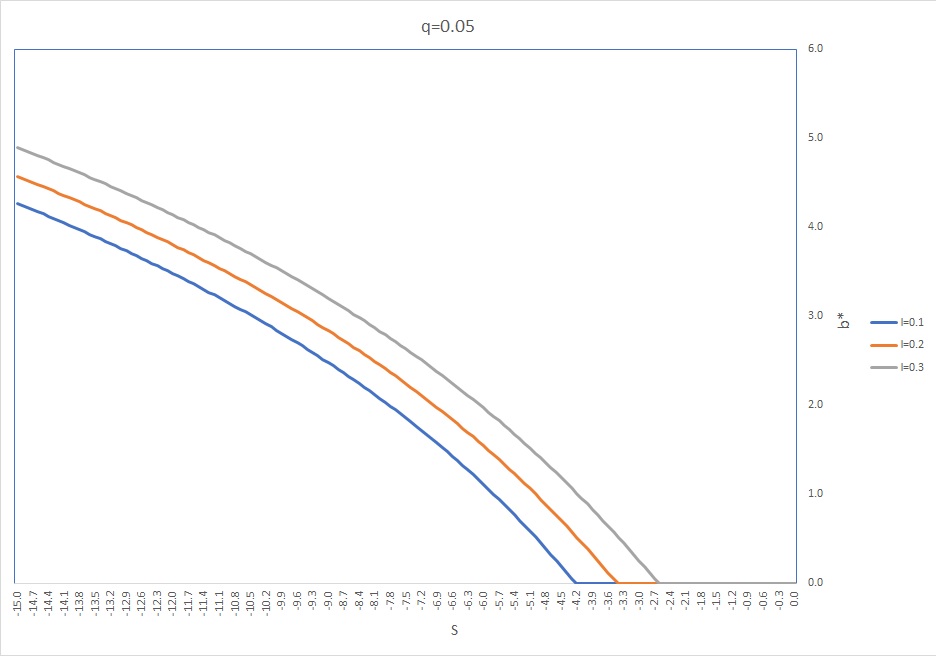}
\caption{Optimal taxation implementation delay for various $S$ values}
\label{figure2}
\end{center}
\end{figure}

\begin{figure}[H]
\begin{center}
\includegraphics[width=5in]{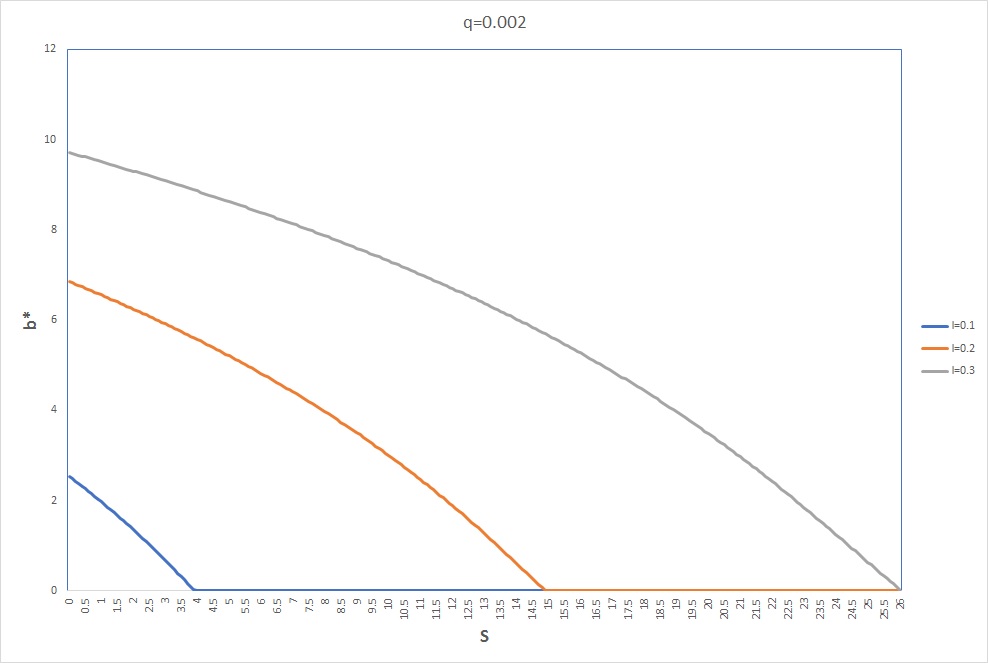}
\caption{Optimal taxation implementation delay for various $S$ values}
\label{figure2}
\end{center}
\end{figure}

{\bf Example 2.} We consider the optimal delay of taxation implementation with capital injections in this example. Similarly, we shall consider three tax rates, $\ell =0.1, 0.2$, and 0.3, for comparison purpose. Let $q=0.05$. According to the condition of positive $a^*$ given in Theorem 4.1, we calculate the following ranges of $\varphi$ values at each assumed $\ell$ level, which are summarized in Table 3. Further, Figure 4 shows the corresponding optimal taxation threshold levels for a specifically chosen range for $\varphi$. One can see from Table 3 and Figure 4 that it is worth implementing tax delays only when $\varphi>1$, which coincides with the definition of $\varphi$. The optimal tax threshold level $a^*$ is an increasing function of the cost of capital injection per dollar amount. Also, for the same cost of capital injection per dollar amount, a higher tax rate leads to a higher optimal tax threshold level.

\begin{table}[ht]
\centering\caption{Existence of positive $a^*$ in various cases} \vspace{-.2cm}
\begin{tabular}{| c | c | c | c | }   \hline
$\ell$ &$\bar{v}(0)$ &$\bar{V}(0)(1-\varphi Z_q(0))$   &$a^*>0$\\ \hline
0.1 &$-2.9622-4\varphi$    &$24-24\varphi$    &$\varphi>1.348$\\
0.2 &$-2.45091-4\varphi$      &$24-24\varphi$    &$\varphi>1.323$\\
0.3 &$-1.93869-4\varphi$    &$24-24\varphi$    &$\varphi>1.297$\\ \hline
\end{tabular}\label{table-2}
\end{table}

\begin{figure}[H]
\begin{center}
\includegraphics[width=5in]{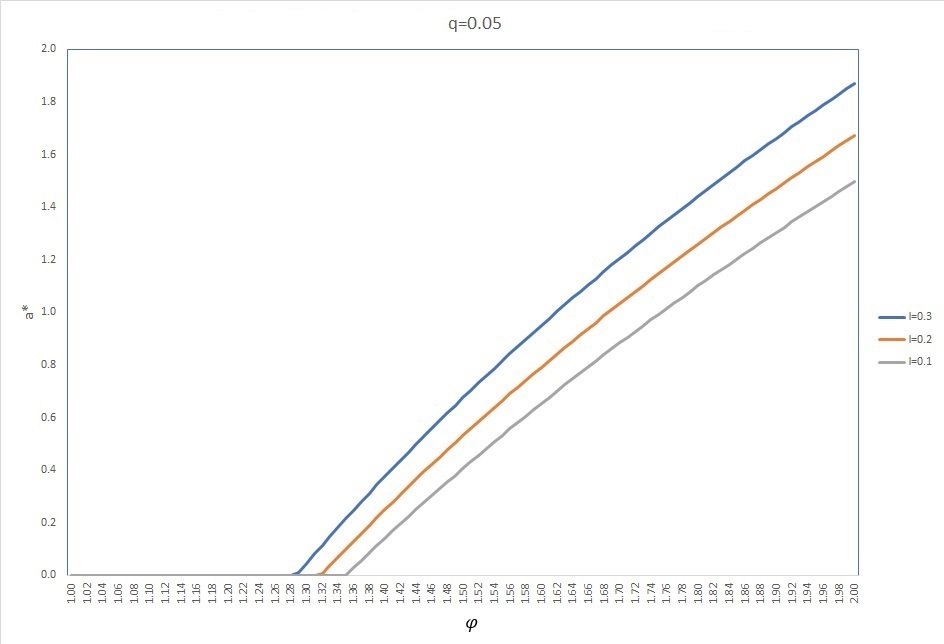}
\caption{Optimal taxation implementation delay for various $\varphi$ values}
\label{figure2}
\end{center}
\end{figure}

\begin{rem}
A negative terminal value of $S$, i.e., a benefit to the insurance company at ruin, can generally trigger a positive optimal tax implementation threshold level $b^*$, but a positive terminal value of $S$, eg a claw-back of early tax relief from the insurance company at ruin can only be effective when the discount rate is extremely small.
\end{rem}

\begin{rem}
A main cause of claiming a tax benefit by the insurance company at ruin is due to the capital loss, which is obvious, according to the capital gain tax rule. The higher the initial tax implementation threshold is, the higher the potential tax benefit can be claimed at the end. It is consistent with our findings in Example 1.
\end{rem}

\begin{rem}
The capital injection strategy seems more robust in respect of discount rates than the terminal value strategy. It is mainly because the capital injections could be needed at a reasonably early stage and the capital injections spread out the whole time line, whilst the terminal value can only occur at the very end, i.e. the time of ruin.
\end{rem}

\medskip
\section*{Acknowledgements}
Wenyuan Wang is very grateful to The University of Melbourne where part of the work on this paper was completed during his visit in % July to September
 2018.

%++++++++++++++++++++++++++++++++++++++++++++++++++++++++++++++++++++++++++++++++++++++++++++++
\medskip

\end{document}